\newcommand{\Hc}{\mathcal{H}}
\newcommand{\mR}{\mathbb{R}}
\newcommand{\mC}{\mathbb{C}}
\newcommand{\lc}{C_+}
\newcommand{\lct}{C_+^t}
\newcommand{\as}{\mathrm{as}}
\newcommand{\inc}{\mathrm{in}}
\newcommand{\out}{\mathrm{out}}
\newcommand{\ret}{\mathrm{ret}}
\newcommand{\adv}{\mathrm{adv}}
\newcommand{\rad}{\mathrm{rad}}
\newcommand{\tr}{\mathrm{tr}}
\newcommand{\wt}{\widetilde}
\newcommand{\dl}{d^2l}
\newcommand{\al}{\alpha}
\newcommand{\ga}{\gamma}
\newcommand{\ep}{\epsilon}
\newcommand{\la}{\lambda}
\newcommand{\w}{\omega}
\newcommand{\W}{\Omega}
\newcommand{\V}{\dot{V}}
\newcommand{\vep}{\varepsilon}
\newcommand{\dsp}{\displaystyle}
\newcommand{\txt}{\textstyle}
\newcommand{\ov}{\overline}
\newcommand{\p}{\partial}
\newcommand\df{:=}
\newcommand{\cc}{\mathrm{compl.\,conj.}}
\DeclareMathOperator{\Rp}{Re}
\DeclareMathOperator{\Ip}{Im}
\DeclareMathOperator{\id}{\mathbbm{1}}
\newtheorem{pr}{Proposition}
\title{Asymptotic structure of electrodynamics revisited}
\author{Andrzej Herdegen\thanks{e-mail: herdegen@th.if.uj.edu.pl}\\
{\it Institute of Physics, Jagiellonian University,}\\
{\it ul.\,S.\,{\L}ojasiewicza 11, 30-348  Krak\'{o}w, Poland}}\date{}
\begin{document}

\maketitle

\begin{abstract}
We point out that recently published analyses of null and timelike infinity and long-range structures in electrodynamics to large extent rediscover results present in literature. At the same time, some of the conclusions these recent works put forward may prove controversial. In view of these facts we find it desirable to revisit the analysis taken up more than two decades ago, starting from earlier works on null infinity by other authors.

\vspace{1ex}
\noindent
keywords: asymptotic electrodynamics, memory effects, large gauge transformation

\vspace{1ex}
\noindent
MSC2010: 70S10, 81T05, 81V10, 81U99

\end{abstract}

\section{Introduction}\label{int}

In a series of recent articles -- see \cite{kps15} and \cite{cl15} and works cited there -- a~group of authors report on new conservation laws and symmetries both in classical and quantum electrodynamics. These structures are supposed to be encoded in the asymptotic properties of electrodynamics. Later, an analog of this structure in gravitation theory was used to analyze anew the black hole information paradox \cite{hps16}. Also, in the context set by these investigations there is a noticeable interest in literature in so-called memory effects, both in gravitation, as well as electrodynamics (see \cite{kps15}, \cite{pa15} and works cited there).

In this article we want to comment on the electromagnetic part of this programme in view of results known from literature. We shall point out that the structure of asymptotic electrodynamics, along lines similar to those proposed by the authors mentioned above, has received more extensive attention in the past. On the other hand, we shall also point out that some claims in their programme may prove controversial. We use this opportunity to summarize the programme taken up by the present author more than two decades ago.

We sketch our main points and the plan of the article.

(i) On the classical level a more extensive analysis of the asymptotic structure of electrodynamics is proposed in the article \cite{he95}, which extends many ideas present earlier in literature, in particular the null infinity analysis by Bramson \cite{br77} and Ashtekar and Streubel \cite{ast81}. The null infinity ``matching conditions'' (so called by Strominger et al. \cite{kps15}) were known before, and in \cite{he95} they are derived in full generality. After recalling an important tool in the field of the wave equation in Section \ref{measure}, we discuss the classical electrodynamics, and the asymptotics of fields in Sections \ref{homog} -- \ref{timelike}. Invariant structures in the asymptotic regions are introduced in Section \ref{invstruct}.

(ii) Our discussion consequently uses Lorenz gauge potentials in electrodynamics. These potentials have well-defined dynamics and null asymptotes appropriate for the asymptotic structure. In Section \ref{lgtrans}, still on the classical level, we ask what other gauges are admitted by this structure. We point out that although the gauge transformation called by the authors of \cite{kps15} and \cite{cl15} (and others) `large gauge transformation' -- LGT in the following -- produces potentials with asymptotes admitted by the structure, it is not a symmetry of this structure. Therefore, the changes induced by LGT \emph{may be} incorporated within this structure, but with interpretation wholly different from a~gauge transformation: they modify the \emph{electromagnetic fields} of the state.

(iii) Commenting in Section \ref{memory} on the relations between long-range structure and so-called memory effects, we recall that the earliest effect of this type  in electrodynamics was discovered long ago by Staruszkiewicz in 1981 \cite{st81}. What concerns us here, Staruszkiewicz's effect is a clear argument in favor of the observability of the long-range degrees of freedom in electrodynamics.

(iv) The classical asymptotic structure was quantized and analyzed extensively in a series of papers \cite{he98, he05, he08, hr11}. `Asymptotic quantization' by Ash\-te\-kar \cite{as86} may  be viewed as the reference starting point of this analysis, but the discussion goes much further towards asymptotic algebraic formulation of matter-radiation system. The authors mentioned at the beginning, as it seems, may have motivations similar to my own: they treat variables at spatial infinity as genuine quantum observables. We make general remarks on specific infrared problems of quantum electrodynamics in Section \ref{quantum}, and then summarize the construction of our asymptotic quantum theory in Sections \ref{constalg} -- \ref{repr}.

(v) We construct a unitary operator built from the long-range quantum variables at spacelike infinity in Section \ref{splike} . We indicate that the construction of such an operator demands an explicit use of a concrete representation of the theory and, within that representation, a subtle limiting. We show that this unitary operator induces a~nontrivial transformation of the basic quantum variables of the asymptotic algebra. This transformation contains elements corresponding to the classical effect induced by LGT. As in the classical case, the changes modify the physics of the state, and their interpretation within the limits of the asymptotic theory is far from a gauge symmetry. We show in Appendix \ref{identity} that our rigorously constructed operator corresponds, in loose terms, to the `LGT generator' of \cite{kps15} and \cite{cl15} (which is not rigorously defined).

(vi) However, if a scattering matrix may be defined, then it is true that the operators constructed by smearing of long-range variables should be conserved. This is a quantum version of a `continuous conservation' law for long-range tails in classical electrodynamics. We mention earlier statements on this problem in Section \ref{scatt}.

(vii) Appendix contains some more technical material needed in the main text.

The article is based mainly on the articles \cite{he95, he98, he05, he08, hr11}. We give a specific reference only in some special cases.

\section{Invariant measure on the set of lightlike directions}\label{measure}

The first fact worth wider knowledge is the existence of a Lorentz-invariant measure on the set of light-like directions. The existence of this measure is a~classical result \cite{ggv66, zw76, st81, pr84}, but -- strangely enough -- ignored by most physicists despite its crucial role in the field of the wave equation and massless fields.\pagebreak

We consider the flat spacetime with a fixed origin, thus described by the Min\-kow\-ski vector space $M$ with the signature $(+,-,-,-)$. We shall write $l$ for any future-pointing lightlike (nonzero) vector and denote
\begin{equation}
 \lc=\{\,l\mid l\cdot l=0,\ l^0>0\,\}\,.
\end{equation}
We shall write $t$ for any future-pointing timelike, unit vector, and denote
\begin{equation}
 \lct=\{\,l\in\lc\mid t\cdot l=1\,\}
\end{equation}
(a unit sphere in the hyperplane $t\cdot x=1$). It is well-known that the measure $d\mu_0(l)=d^3l/l^0$ on the lightcone is Lorentz-invariant. However, equally important, but less known, is the following construction. Let $f(l)$ be a measurable function on $\lc$, homogeneous of degree $-2$: for each $\ga>0$ there is $f(\ga l)=\ga^{-2}f(l)$. Then the integral defined by
\begin{equation}\label{d2l}
 \int f(l)\,d^2l=\int_{\lct} f(l)\,d\W_t(l)\,,
\end{equation}
where $d\W_t(l)$ is the angle measure on the unit sphere, does not depend on the choice of the vector $t$ (Lorentz invariance\footnote{In fact, the measure is not only Lorentz, but generally conformal invariant, see \cite{ggv66, pr84} and Appendix \ref{intnull}.}). We give a simple proof of this fundamental fact in Appendix \ref{intnull}.  It follows that if we denote ($a$, $b$, etc.\ are spacetime indices\footnote{Usually, we treat spacetime indices in the spirit of `abstract index notation' of Penrose \cite{pr84}, but if concrete coordinates are needed (as, e.g., for asymptotic expansion), then always a Minkowski basis coordinates are meant.})
\begin{equation}
 L_{ab}=l_a\frac{\p}{\p l^b}-l_b\frac{\p}{\p l^a}
\end{equation}
-- the generators of the Lorentz transformations, intrinsic differential operators on the lightcone, then
\begin{equation}\label{Ld2l}
 \int L_{ab}f(l)\,d^2l=0\,.
\end{equation}

\section{Homogeneous Maxwell equations}\label{homog}

Let $A(x)$ be a Lorenz-gauge potential (we often suppress spacetime indices) of a free electromagnetic field $F_{ab}=\p_aA_b-\p_bA_a$, thus satisfying the homogeneous wave equation. The Fourier representation
\begin{equation}\label{free}
 A(x)=\frac{1}{\pi}\int e^{-ix\cdot k}a(k)\vep(k^0)\delta(k^2)d^4k\,,
\end{equation}
where $\delta$ is the Dirac measure, $\vep(\w)$ is the sign of $\w$, and where the relations
\begin{equation}
 \ov{a(k)}=-a(-k)\,,\quad k\cdot a(k)=0\,,\quad a(k)\rightarrow a(k)+k\beta(k)
\end{equation}
reflect reality of $A$, the Lorenz condition and a possible gauge transformation respectively, defines a very wide class (also possibly distributional) of such fields, and one needs a physically motivated selection criterion. An obvious restriction is to consider fields with finite energy-momentum, which is given by
\begin{equation}
 P^a=-\int \ov{a(k)}\cdot a(k)\, k^a\, d\mu_0(k)\,.
\end{equation}
This is still a very large class and its often considered subclass -- having mathematical advantages -- consists of the so called regular wave packets -- solutions with initial data (on a Cauchy hyperplane) having smooth Fourier transforms with compact support not containing the zero vector. However, this class does not include fields of the Coulomb-like decay at spatial infinity -- such fields are produced in scattering of charged particles. We describe further selection in three steps.

\subsection{Step 1: Coulomb-like spacial decay. Conservation of spacelike tails}\label{step1}

 The selection criterion which admits fields produced in scattering events, but at the same time leaves more infrared-singular solutions outside, is this: for $y^2<0$ the potential $A$ has a well-defined scaling limit
\begin{equation}\label{asx}
 A^\as(y)\df\lim_{\la\to\infty}\la A(\la y)\,,\qquad y^2<0\,.
 \end{equation}
In terms of the Fourier transform this is equivalent to the existence of the limit
\begin{equation}\label{asp}
  a^\as(k)=\lim_{\mu\searrow0}\mu\, a(\mu k)\,;
\end{equation}
note that $a^\as(l)$ is homogeneous of degree $-1$. \emph{This type of singularity of} $a(k)$ \emph{is consistent with the finite energy demand},\footnote{We emphasize this well-known fact, as the authors of \cite{kps15} seem not to appreciate it. We shall return to this point later on.}  and we further consider fields satisfying both restrictions. For fields satisfying our selection criterion not only the scaling limit centered at zero is well defined, but also for each $x\in M$ and $y\in M$, $y^2<0$, there is
\begin{multline}\label{aspot}
 \lim_{\la\to\infty}\la A(x+\la y)=\frac{-i}{2\pi}\int\frac{a^\as(l)}{y\cdot l-i0}\,\dl +\cc \\
 = \frac{1}{\pi}\int\frac{\Ip a^\as(l)}{y\cdot l}\,\dl+\int\Rp a^\as(l)\,\delta(y\cdot l)\,\dl\,.
\end{multline}
This may be viewed, if one wishes, as a `continuous conservation law': the form of the limit does not depend on the choice of the reference point $x$ in Minkowski space. We note that:
\begin{quote}
\emph{This `conservation law' has nothing to do with any gauge symmetry -- the field $A$ may be of any algebraic type -- and is only a property of the wave equation together with the selection criterion~\eqref{asp}}.
\end{quote}
Note also that the two parts on the rhs have definite parities with respect to the reflection $y\to-y$: the first part is odd, while the second is even.\footnote{We mention as an aside, that there exists an interesting quantum theory by Staruszkiewicz \cite{st89}, based on a slight extension of the structure appearing in \eqref{aspot}, which aims at geometrization of the elementary charge. See also other later works of this author, as well as \cite{he05}.}  Now, all standard free fields taking part in scattering processes in electrodynamics are of the second type: the spacelike tails of their Lorentz potentials are even (and the tails of the fields themselves are correspondingly odd). For example, the potential of the radiation field (i.e. $A^\ret-A^\adv$) of a point particle with charge $q$ scattered instantaneously in $x=0$ is given by
\begin{equation}\label{qrad}
 A^\mathrm{rad}(x)=q\theta(-x^2)\bigg(\frac{v}{\sqrt{(v\cdot x)^2-x^2}}
 -\frac{u}{\sqrt{(u\cdot x)^2-x^2}}\bigg)\,,
\end{equation}
where $v$ and $u$ are the initial and the final velocity respectively (the singularity at the cone $x^2=0$ is an artefact due to the instantaneous nature of the scattering event). The same is true for radiation fields produced by sources due to massive fields (as in the Maxwell-Dirac system). This absence of an odd part may be used as a further restricting criterion, but before doing this we want to give another form to the above formulae.

Let $l\in C_+$ and denote
\begin{equation}\label{Va}
 \V(s,l)=-\int\limits_{-\infty}^{+\infty} \w a(\w l) e^{-i\w s}\,d\w\,,
\end{equation}
where overdot denotes differentiation with respect to the real variable $s$, and the function $V$ will be obtained by integration. It is easy to see that $\V$ is a~real function homogeneous of degree $-2$, and -- in the case of a vector function representing a Lorenz potential -- transversal:
\begin{equation}
 \V(\mu s,\mu l) = \mu^{-2} \V(s, l)\quad  (\mu>0)\,,\qquad l\cdot \V(s,l)=0\,.
\end{equation}
Using this function in the Fourier representation of $A(x)$ one obtains another useful integral representation of the solution of the wave equation
\begin{equation}\label{freedl}
 A(x)=-\frac{1}{2\pi}\int\V(x\cdot l,l)\,\dl\,,
\end{equation}
whose advantage is that the integration goes over a compact set. If $A$ is a~Lorenz potential of an electromagnetic field, then the field is expressed by
\begin{equation}\label{freeFdl}
 F_{ab}(x)=-\frac{1}{2\pi}\int \big[l_a\ddot{V}_b(x\cdot l,l)-l_b\ddot{V}_a(x\cdot l,l)\big]\,\dl\,,
\end{equation}
and the gauge transformation takes now the form
\begin{equation}\label{gaugetr}
 \V(s,l)\rightarrow \V(s,l)+l\dot{\al}(s,l)\,.
\end{equation}
Now, our selection criterion -- the existence of $a^\as(k)$, Eq.\,\eqref{asp} --  implies that $\w\Rp a(\w l)$ is continuous in $\w$ at zero, while $\w\Ip a(\w l)$ has a jump at zero of the magnitude $2\Ip a^\as(l)$. Correspondingly, the behavior of $\V(s,l)$ for large $|s|$ is governed by \footnote{More precisely, the estimate of the rest in the form $O(|s|^{-1-\ep})$ is a slightly stronger assumption. If $\Ip a^\as(l)=0$, what we are going to assume in the next section, and $\ep\in(0,1)$, then this estimate implies that $\w a(\w l)$ is $\ep$-H\"older continuous in $\w$.}
\begin{equation}\label{estim}
 \V(s,l)=-\frac{2}{s}\Ip a^\as(l)+O(|s|^{-1-\ep})\quad
 \text{for}\quad |s|\to\infty\,.
\end{equation}

\subsection{Step 2. Null asymptotes}\label{step2}

The null asymptotic behavior of the field, which we now want to investigate, depends crucially on this estimate. If $\Ip a^\as(l)\neq0$, then $RA(x\pm Rl)$ is of order $\log R$ for large $R$. Moreover, one can show that also angular momentum radiated over finite time into any solid angle is not well-defined for \mbox{$R\to\infty$}. This is another reason to apply the more restrictive selection criterion mentioned before. Thus from now on we assume that
\begin{equation}\label{ima}
 \Ip a^\as(l)=0\,.
 \end{equation}
In that case the estimate \eqref{estim} implies that the function $\V(s,l)$ is absolutely integrable over $s\in\mR$ and we fix $V(s,l)$ by demanding that it vanishes for $s\to\infty$:
\begin{equation}\label{V+}
 V(s,l)=-\int_s^{+\infty}\V(s',l)\,ds'\quad\iff\quad V(+\infty,l)=0\,;
\end{equation}
then for $s\to-\infty$ it has a~well defined limit obtained from the inversion of formula \eqref{Va}:
\begin{equation}
 V(-\infty,l)=2\pi\lim_{\w\to0}\w a(\w l)\,.
\end{equation}
We also define
\begin{equation}\label{Vprim}
 V'(s,l)=-V(s,l)+V(-\infty,l)\,,
 \end{equation}
so that
\begin{equation}\label{V'-}
 V'(-\infty,l)=0\,,
\end{equation}
and then
\begin{equation}\label{freematch}
 V'(+\infty,l)=V(-\infty,l)\,.
\end{equation}
With these definitions the null asymptotic behavior of the potentials is easily expressed \cite{he95}:
\begin{equation}\label{freenullA}
 \lim_{R\to\infty}RA_b(x+Rl)=V_b(x\cdot l,l)\,,\qquad \lim_{R\to\infty}RA_b(x-Rl)=V'_b(x\cdot l,l)\,,
\end{equation}
and the null asymptotes of the electromagnetic fields are then:
\begin{gather}\label{freenullF}
 \lim_{R\to\infty}RF_{ab}(x+Rl)=l_a\V_b(x\cdot l,l)-l_b\V_a(x\cdot l,l)\,,\\
 \lim_{R\to\infty}RF_{ab}(x-Rl)=l_a\V'_b(x\cdot l,l)-l_b\V'_a(x\cdot l,l)\,.
\end{gather}
It is evident that the relations between $V$ and $V'$ are gauge-independent, and quantities $l\wedge V(s,l)$ and $l\wedge V'(s,l)$ are gauge invariant.

The real parameter $s$ appearing in our formulae has a twofold interpretation. On the one hand, if in the above asymptotic formulas one sets $x=st$ ($t$ as defined in Section \ref{measure}) and scales $l$'s to $t\cdot l=1$, then $s$ is the retarded or advanced time, according to the case. On the other hand, if one compactifies the spacetime \emph{a la} Penrose, then $s$ is an affine parameter along generators of future or past null infinity. In particular, $V'(+\infty,l)$ and $V(-\infty,l)$ are the values of the respective functions on the future (past) edge of the past (future) null infinity respectively, and their equality \eqref{freematch} is a free field version of a~general `matching property' to be discussed below. Fields with $V(-\infty,l)=0$ are called \emph{infrared-regular}, and for them the spacelike tail vanishes, as seen from the formula \eqref{aspot}. Fields with $V(-\infty,l)\neq0$ are called \emph{infrared-singular}.

At this point we want to stress once more a crucial fact concerning the definition of $V(s,l)$. As indicated above, this function is obtained by integrating $\V(s,l)$ over $s$, so it may appear one has a freedom of transformation $V(s,l)\rightarrow V(s,l)+V_0(l)$, with $V_0$ constant in~$s$. However, $A(x)$ is uniquely determined by $\V(s,l)$ alone according to the formula \eqref{freedl}, so its null asymptotes (past and future) are also unique. It is only with the specification \eqref{V+} (and \eqref{Vprim} for the past null asymptote, with \eqref{V'-} as a consequence) that the asymptotic formulae \eqref{freenullA} are true.

We exploit this knowledge to characterize more fully possible functions of gauge transformations. Suppose that two Lorenz potentials differ by a gauge transformation, $A^{(2)}=A^{(1)}+\p\Lambda$. Then $V^{(2)}(s,l)=V^{(1)}(s,l)+l\alpha(s,l)$, with $V^{(2)}(+\infty,l)=V^{(1)}(+\infty,l)=0$, thus also $\alpha(+\infty,l)=0$. According to formula~\eqref{freedl} we have $\p\Lambda(x)=-\frac{1}{2\pi}\int l\dot{\alpha}(x\cdot l,l)\,d^2l$, which implies
\begin{equation}
 \Lambda(x)=-\frac{1}{2\pi}\int \alpha(x\cdot l',l')\,d^2l'+\vep_+\,,
\end{equation}
where $\vep_+$ is a constant number. We set here $x=st\pm Rl$ and take the limit $R\to\infty$. As $l\cdot l'>0$ almost everywhere, we find
\begin{equation}\label{Lambdanull}
\lim_{R\to\infty}\Lambda(st\pm Rl)=\vep_\pm\,,\qquad \vep_-=\vep_+-\frac{1}{2\pi}\int\alpha(-\infty,l')\,d^2l'\,.
\end{equation}
We summarize:
\begin{quote}
\emph{Among Lorenz gauges there can be no gauge transformation $\p\Lambda$ whose asymptote $l\alpha(s,l)$ remains constant along null infinity generators (i.e., is independent of $s$). For any admissible gauge function $\Lambda$ the null-asymptotic limits \eqref{Lambdanull} are two, in general different, constant numbers}.\footnote{Note that although the field $\Lambda(x)$ is smooth, in general the past and future null asymptotic limits $\vep_\pm$ do not `match'.}
\end{quote}

To close this section we would like to illustrate the null asymptotic behavior, and comment on the possibility of asymptotic expansions of the form\footnote{An equivalent form is assumed in \cite{kps15}, formulas (2.8); the difference in expansion of $A_z$ results from the use of coordinate basis in which $z$-vector is proportional to $R$.}  $A(st+Rl)=\sum_{n=1}^\infty R^{-n}A^{(n)}(s,l)$, in the simplest possible case of a~scalar field~$A$, spherically symmetric in a chosen frame. Let $A$ be given by \eqref{freedl} with $\V(s,l)=(t\cdot l)^{-2}\dot{W}((t\cdot l)^{-1}s)$, where $t$ is a~fixed time axis vector and $\dot{W}(\tau)$ is any regular function decaying at least like $|\tau|^{-1-\ep}$ for $|\tau|\to\infty$. This field satisfies all our selection criteria, and the explicit integration in formula \eqref{freedl} gives
\begin{equation}
 A(x)=\frac{W\big(t\cdot x-\sqrt{(t\cdot x)^2-x^2}\big)-W\big(t\cdot x+\sqrt{(t\cdot x)^2-x^2}\big)}{\sqrt{(t\cdot x)^2-x^2}}\,,
\end{equation}
where  $W(\tau)=-\int_\tau^\infty\dot{W}(\tau')d\tau'$. One can now immediately explicitly confirm all our statements on the null asymptotes, Eqs \eqref{Vprim}--\eqref{freenullA} and subsequent comments. Let us choose two examples:
\begin{equation}
 W_1(\tau)=\frac{\alpha}{(\tau^2+\tau_0^2)^{\ep/2}}\,,\quad W_2(\tau)=\int\limits_\tau^{+\infty}\frac{\beta\,d\tau'}{({\tau'}^2+\tau_0^2)^{(1+\ep)/2}}\,,
\end{equation}
where $\alpha$, $\beta$ and $\tau_0$ are constants. With these choices both fields $A_1$ and $A_2$ fall into the class specified in this section, the first one is infrared regular, while the second is infrared singular. However, for $\ep<1$ the asymptotic expansion of these fields does not exist in the form of the series in $1/R$ beyond the leading order. In general, this problem is even more involved, but for the electromagnetic fields due to the potentials defined in the present section the two leading orders do exist.

\subsection{Step 3. Fields of electric type}\label{step3}

There is one final step needed to complete our selection criterion. The fact that there are no magnetic monopoles and free fields are thus produced by scattering electric charges is encoded in the property of the spatial tail of the field being of electrical type. This is equivalent, as it turns out, to the condition
\begin{equation}
 L_{[ab}V_{c]}(-\infty,l)=L_{[ab}V'_{c]}(+\infty,l)=0\,.
\end{equation}
Smooth functions, homogeneous of degree $-1$, transversal: $l\cdot V(-\infty,l)=0$, and satisfying the above condition form a linear space, whose properties are described in Appendix \ref{unchv}. It follows that
\begin{equation}\label{VFi}
 l_aV_b(-\infty,l)-l_bV_a(-\infty,l)=L_{ab}\Phi(l)\,,
\end{equation}
where $\Phi(l)$ is a homogeneous function given by
\begin{equation}\label{Fi}
 \Phi(l)=\frac{1}{4\pi}\int\frac{l\cdot V(-\infty,l')}{l\cdot l'}\,\dl'\,.
\end{equation}
Further properties of the relation of $V$ with $\Phi$, which will be used later, are to be found in Appendix \ref{unchv}.

We end this section with a simple illustration of the null structures for the potential of the scattered particle \eqref{qrad}. In this case $V(s,l)=\theta(-s)V(-\infty,l)$ ($\theta$ is the Heaviside step function; the singularity at $s=0$ is again due to unrealistic instantaneous scattering), $V(-\infty,l)=q[(v/v\cdot l)-(u/u\cdot l)]$ and  $\Phi(l)=q\log(v\cdot l/u\cdot l)$.

\section{Matter-radiation system: null and spacelike infinity}\label{nullspacelike}

Let now the inhomogeneous Maxwell equations be a part of a closed electrodynamic system. We again assume Lorenz gauge, thus
\begin{equation}
 \Box A(x)=4\pi J(x)
\end{equation}
in Gaussian units, which we use in this paper.\footnote{Let us remark in passing that the authors of \cite{kps15} write the inhomogeneous Maxwell equation with $e^2J$ on the rhs. This implies a possible, but a rather exotic units system.} We consider a scattering situation, thus the conserved current $J(x)$ describes sources which stabilize in remote past and future. This stabilization, more specifically,  means that for $v$ on the future unit hyperboloid the leading behavior of the current is
\begin{equation}\label{asJ}
 J(\pm\la v)\sim \la^{-3}v\,\rho_\pm(v)\quad \text{for}\quad \la\to\infty\,,
\end{equation}
with some scalar functions $\rho_\pm(v)$ on the hyperboloid. This is precisely true for asymptotically freely moving classical particles, but also, up to oscillating terms not contributing to the leading behavior of $A$, for matter fields like massive  Dirac field. Also, we assume that $J(x)$ vanishes sufficiently fast in spacelike directions. It follows then that the function defined by
\begin{equation}
 V_J(s,l)=\int J(x)\,\delta(s-l\cdot x)\,d^4x\,,
\end{equation}
where $\delta$ is the Dirac measure, is well-defined, bounded and for $s$ tending to $\pm\infty$ has finite limits
\begin{equation}\label{VJinfty}
 V_J(\pm\infty,l)=\int\frac{v}{v\cdot l}\,\rho_\pm(v)\,d\mu(v)\,,
\end{equation}
the integral over the unit hyperboloid. These limits are fully determined by the asymptotic forms of the current in the future/past. The conserved charge carried by the current is
\begin{equation}
 Q=l\cdot V_J(s,l)= \int\rho_\pm(v)\,d\mu(v)\,.
\end{equation}
Moreover, due to \eqref{asJ} one has
\begin{equation}
 L_{[ab}V_J{}_{c]}(\pm\infty,l)=0\,.
\end{equation}
The reason to define this function is that the null asymptotes of the retarded and advanced fields are then given by
\begin{gather}
 \lim_{R\to\infty}RA^\ret(x+Rl)=V_J(x\cdot l,l)\,,\quad \lim_{R\to\infty}RA^\ret(x-Rl)=V_J(-\infty,l)\,,\\
 \lim_{R\to\infty}RA^\adv(x-Rl)=V_J(x\cdot l,l)\,,\quad \lim_{R\to\infty}RA^\adv(x+Rl)=V_J(+\infty,l)\,.
\end{gather}
It also follows that the radiation field $A^\rad=A^\ret-A^\adv$ is the free field given by formula \eqref{freedl} in which one should substitute $\V_J$ for $\V$, and it satisfies all our selection criteria.

The total potential is decomposed in two ways appropriate for incoming and outgoing characterization, $A=A^\ret+A^\inc=A^\adv+A^\out$, respectively, where $A^\inc$ and $A^\out$ are free fields. It is now evident that if the incoming field $A^\inc$ satisfies our selection criteria, then also the outgoing field $A^\out$ does. We again denote by $V^\out$ and $V^\inc$ the future asymptotes, and by ${V^\out}'$ and ${V^\inc}'$ the past asymptotes of $A^\out$ and $A^\inc$, respectively. Then it follows that the null asymptotes of the total potential and the total field are given by
\begin{equation}\label{nullA}
 \lim_{R\to\infty}RA(x+Rl)=V(x\cdot l,l)\,,\quad \lim_{R\to\infty}RA(x-Rl)=V'(x\cdot l,l)\,,
\end{equation}
\begin{equation}\label{nullF}
\begin{split}
  &\lim_{R\to\infty}RF_{ab}(x+Rl)=l_a\V_b(x\cdot l,l)-l_b\V_a(x\cdot l,l)\,,\\[1ex]
  &\lim_{R\to\infty}RF_{ab}(x-Rl)=l_a\V'_b(x\cdot l,l)-l_b\V'_a(x\cdot l,l)\,,
\end{split}
\end{equation}
where
\begin{equation}\label{VV'}
 \begin{gathered}V(s,l)=V_J(s,l)+V^\inc(s,l)=V_J(+\infty,l)+V^\out(s,l)\,,\\[1ex]
 V'(s,l)=V_J(-\infty,l)+V^\inc{}'(s,l)=V_J(s,l)+V^\out{}'(s,l)\,.
 \end{gathered}
\end{equation}
Putting here $s=\pm\infty$ one finds that
\begin{equation}
 V(+\infty,l)=V_J(+\infty,l)\,,\quad V'(-\infty,l)=V_J(-\infty,l)\,,
\end{equation}
so these characteristics are totally due to the outgoing/incoming currents, respectively, while
\begin{equation}
\begin{aligned}
 &V(-\infty,l)=V_J(+\infty,l)+V^\out(-\infty,l)\,,\\[1ex]
 &V'(+\infty,l)=V_J(-\infty,l)+V^\inc{}'(+\infty,l)\,,
\end{aligned}
\end{equation}
so these characteristics are the sums of the source and the free infrared-singular contributions. Moreover, adding the sides of equations \eqref{VV'} one finds that $V(s,l)+V'(s,l)-V_J(s,l)$ is constant in $s$, thus
\begin{equation}\label{match}
 V(s,l)+V'(s,l)-V_J(s,l)=V'(+\infty,l)=V(-\infty,l)\,.
\end{equation}
The second equality in this formula constitutes a matching property of the values at the past edge of the future null infinity, and the future edge of the past null infinity (see \cite{he95}, Eq.\ (2.26) and the following discussion). We stress once more that to this matching equality \emph{contribute both retarded/advanced as well as free `in'/`out' fields}.\footnote{It is not true that ``finite energy wave packets die off at $i^0$'' as the authors of \cite{kps15} put it.} The spacelike asymptotic behavior is also governed by this characteristic, for $y^2<0$:
\begin{align}
 \lim_{R\to\infty}RA_b(x+Ry)&=\int V_b(-\infty,l)\delta(y\cdot l)\dl\,,\label{spA}\\
 \lim_{R\to\infty}R^2F_{ab}(x+Ry)&=\int \big(l_aV_b(-\infty,l)-l_bV_a(-\infty,l)\big)\delta'(y\cdot l)\dl\,,\label{spF}
\end{align}
where $\delta'$ is the derivative of the Dirac delta distribution.\pagebreak

At this point an observation on gauge of the total potential is due. As we work in a scheme which assumes Lorenz gauge, the gauge transformation $A\rightarrow A+\p\Lambda$ demands $\Box\Lambda=0$. Therefore, $\p\Lambda$ is a free potential satisfying all our selection criteria. In particular, if $l\alpha(s,l)$ is its future null asymptote, then $\alpha(+\infty,l)=0$, as in the free case with no sources (and similarly $\alpha'(-\infty,l)=0$ for past asymptote). Thus the limit values $V(+\infty,l)$ and $V'(-\infty,l)$ are gauge-independent (in the Lorenz-gauge class).\footnote{In this connection we note that there is an erroneous remark on `constant gauge' transformation in \cite{he98} preceding formula (3.18), but in fact no such freedom exists and this formula gives the correct unique value of $V(+\infty,l)$. This lapse has no further continuation or consequences in that article.}

We end this section with an illustration of the `matching property' \eqref{match}. Consider a scattering event in which there are $n'$ incoming particles and $n$ outgoing particles, with charges and four-velocities given by $q'_i, v'_i$ and $q_j, v_j$, respectively. It is easy to show that for a single free particle with charge $q$ and velocity $v$ one has $V_J(s,l)=q\,v/v\cdot l$. Therefore, in our scattering event we have
\begin{equation}
 V_J(-\infty,l)=\sum_{i=1}^{n'}q'_i\,\frac{v'_i}{v'\cdot l}\,,\quad
 V_J(+\infty,l)=\sum_{i=1}^{n}q_i\,\frac{v_i}{v\cdot l}\,.
\end{equation}
Then Eq.\ \eqref{match} gives
\begin{equation}\label{infra}
 2\pi \lim_{\w\to0}\w \big(a^\out(\w l)-a^\inc(\w l)\big)=\sum_{i=1}^{n'}q'_i\frac{v'_i}{v'\cdot l} - \sum_{i=1}^{n}q_i\frac{v_i}{v\cdot l}\,.
\end{equation}
This relation is a clear announcement of the problems to come in quantum theory: Suppose the space of incoming states does not contain infrared-singular `in' fields, so the `in' fields are represented in the usual Fock space of photons. However, this is only possible if the profiles of incoming photon states satisfy $\dsp\lim_{\w\to0}\w a^\inc(\w l)=0$. On the other hand, the rhs of Eq.\,\eqref{infra} cannot vanish for all scattering amplitudes (depending on velocities $v'_i$ and $v_i$) in the given representation. The relation tells us then that $\dsp\lim_{\w\to0}\w a^\out(\w l)\neq0$, which contradicts the possibility of representing the `out' profiles in Fock space. We shall come later to existing strategies for the resolution of this difficulty.

\section{Matter-radiation system: timelike infinity}\label{timelike}

We start by recalling the Fourier representation of the solution of the free Dirac equation. A convenient version of this integral representation is:
\begin{equation}\label{DF}
 \psi(x)=\Big(\frac{m}{2\pi}\Big)^{3/2} \int e^{{\textstyle-im\,x \cdot v\,\ga\cdot v}}\ga\cdot v\, f(v)\,d\mu(v)\,,
\end{equation}
where $d\mu(v)=d^3v/v^0$ is the standard invariant measure on the unit future hyperboloid and $f(v)$ is a Dirac spinor-valued function on this hyperboloid. A more standard-looking form is obtained by noting that
\begin{equation}
 e^{{\textstyle-im\,x \cdot v\,\ga\cdot v}}\ga\cdot v=e^{-im\,x \cdot v}P_+(v)-e^{+im\,x \cdot v}P_-(v)\,,
\end{equation}
where $P_\pm(v)=\frac{1}{2}(1\pm \ga\cdot v)$.
While the free electromagnetic field is fully encoded in its null asymptote, the free Dirac field is fully encoded in its timelike asymptote, which is
\begin{equation}\label{Das}
 \psi(\pm\la v)\, \sim\, \mp i\,\la^{-3/2} e^{{\textstyle \mp i( m\la + \pi /4)\ga\cdot v}} f(v)\qquad \text{for}\quad \la\to\infty\,.
\end{equation}
Remember that the free field has no gauge freedom of the second kind, so $\psi$ and $f$ are unique up to a constant phase.

Suppose now that the current $J$ appearing in the last section is produced by the Dirac electron-positron field, and the fields form a closed theory with the Dirac equation added. It is well-known that the full Dirac field in standard Lorenz gauge does not approach then in simple manner any free field for time tending to plus/minus infinity, its  leading asymptotic term containing electromagnetic contributions. The standard technique usually employed for handling this problem was proposed by Dollard \cite{do64}, and in quantum electrodynamics was applied by Kulish and Faddeev \cite{kf70}. However, we find it more convenient, for reasons to become clear below, to use another procedure.

We introduce a new gauge of the total electromagnetic potential: if $A$ is the Lorenz potential discussed in the preceding sections, then
\begin{equation}\label{Atr}
 A_\tr(x)=A(x)-\p S(x)\,,
\end{equation}
where $S$ is any scalar function assumed to satisfy
\begin{equation}\label{gauge}
 S(x)\simeq \log\sqrt{x^2}\,x\cdot A(x) \quad\text{for}\quad x^2\to\infty\,.
\end{equation}
For $x=\pm\la v$ the leading asymptotic terms of $A(\pm\la v)$ for $\la\to\infty$ are of order $1/\la$. It follows then that in this limit $v\cdot A_\tr(\pm\la v)=O(\la^{-1-\ep})$. Now we can use one of the results of \cite{he95}, which says the following. \emph{In the gauge defined above the full Dirac field again has the asymptotic behavior given by \eqref{Das}, with some spinor functions~$f_\mp$ for `in' and `out' asymptotes respectively.} There are no electromagnetic field modifications of this asymptotic terms, and one could define free incoming/outgoing fields by plugging $f_\mp$ into the integral representation \eqref{DF}. We indeed intend to perform such operation, but with one additional modification explained in the next section.

We note an important fact that there is no gauge freedom of the second kind in the definition of asymptotes $f_\mp$. Indeed, one can show that $f_\mp$ do not depend either on the choice of particular Lorenz gauge of $A$, or of the specific function $S$ satisfying \eqref{gauge}.

\section{Invariant structures and conserved quantities}\label{invstruct}

The causal `in' and `out' infinities are equipped with two Poincar\'e invariant structures: symplectic form at the null infinity and scalar product in the timelike infinity. In the following we specify to the `out' infinity, similar structures exist in the `in' case. Namely, the Poincar\'e transformations acting by
\begin{equation}
\begin{aligned}
&[T_{x,A} V]_a (s,l)= \Lambda(A)_a{}^b\, V_b(s-x\cdot l,\Lambda^{-1} l)\,,\\
&[R_{x,A}f](v) = e^{\txt imx\cdot v\ga\cdot v}S(A)f(\Lambda^{-1}v)
\end{aligned}
\end{equation}
leave invariant the symplectic form
\begin{equation}\label{symf}
 \{V_1,V_2\}= \frac{1}{4\pi}\int(\V_1\cdot V_2-\V_2\cdot V_1)(s,l)\,ds\,\dl=\{T_{x,A}V_1,T_{x,A}V_2\}\,,
\end{equation}
and the pre-Hilbert scalar product
\begin{equation}\label{scpr}
 (f_1,f_2)=\int \ov{f_1(v)}\ga\cdot v f_2(v)\,d\mu(v)=(R_{x,A}f_1,R_{x,A}f_2)\,.
\end{equation}
Here $(x,A)$ is an element of the affine extension by translations of the group $SL(2,\mC)$, $\Lambda(A)$ is the corresponding Lorentz transformation, and $S(A)$ the bispinor representation.

For two solutions $\psi_i$ ($i=1,2$) of the Dirac equation in electromagnetic field one has the scalar product defined by the integral over any Cauchy surface: $(\psi_1,\psi_2)=\int \ov{\psi(x)}\gamma^a\psi(x)\,d\sigma_a(x)$. The product \eqref{scpr} expresses this quantity in terms of the asymptotic variables defined at the end of Section~\ref{timelike}.

For a pair of free, infrared regular electromagnetic fields one has the standard symplectic form defined by the integral over any Cauchy surface by $\{F_1,F_2\}=\frac{1}{4\pi}\int(F_1^{ab}A_{2b}-F_2^{ab}A_{1b})(x)\,d\sigma_a(x)$. This form, when expressed in terms of null asymptotes is equal to the form \eqref{symf}. As can be easily seen, the integral \eqref{symf} extends to all free fields in the class considered in this article (it is absolutely convergent also for infrared singular fields\footnote{We note that, in contrast, the integral over a Cauchy surface is not absolutely convergent for such fields, and one needs a regularization to make it well-defined, see \cite{he08}.}). Note that for free fields the form \eqref{symf} is gauge-invariant: it does not change under the transformation $V_i(s,l)\mapsto V_i(s,l)+l\alpha_i(s,l)$ ($i=1,2$).

The form \eqref{symf} extends also to the asymptotes of total fields, with the integral staying absolutely convergent. In this case the gauge transformation causes the following change of this quantity:
\begin{equation}\label{gauch}
 \{V_1+l\alpha_1,V_2+l\alpha_2\}=\{V_1,V_2\}+\frac{Q_1}{4\pi}\int\alpha_2(-\infty,l)\dl-\frac{Q_2}{4\pi}\int\alpha_1(-\infty,l)\dl\,,
\end{equation}
where $Q_i=l\cdot V_i(s,l)$ is the charge of the field $V_i$. This addition will play a~role in the quantum case.

Let us return to the Poincar\'e transformations of the asymptotic structures. The generators of these transformations defined by
\begin{equation}
T_{x,A}-1\approx x^ar_a-\frac{1}{2}\w^{ab}n_{ab}\,,\quad
R_{x,A}-1\approx ix^ap_a-\frac{i}{2}\w^{ab}m_{ab}\,,
\end{equation}
for infinitesimal $x^a$ and $\w^{ab}$,
where $\Lambda^a{}_b\approx g^a_b+\w^a{}_b$, are
\begin{gather*}
 (r_aV)_c(s,l)=-l_a\V_c(s,l)\,,\quad  (p_af)(v)=mv_a\ga\cdot vf(v)\,,\\[1ex]
 (n_{ab}V)_c(s,l)=-L_{ab}V_c(s,l)-g_{ca}V_b(s,l)+ g_{cb}V_a(s,l)\,,\\[1ex]
 (m_{ab}f)(v)=i\big(\mu_{ab} +\tfrac{1}{4}[\ga_a,\ga_b]\big) f(v)\,,
\end{gather*}
where $\mu_{ab}=v_a(\p/\p v^b)-v_b(\p/\p v^a)$ acts intrinsically in the hyperboloid. Consider now the energy-momentum and $4$-angular momentum going out into timelike and null infinity, carried respectively by massive and electromagnetic fields.
One finds that these quantities are elegantly expressed by
\begin{equation}\label{PMf}
\begin{aligned}
P^\out_a&=(f_+, p_af_+) +{\txt\frac{1}{2}}\{ V,r_aV\}\,,\\[1ex]
M^\out_{ab}&=(f_+, m_{ab}f_+) +{\txt\frac{1}{2}}\{ V,n_{ab}V\}\,,
\end{aligned}
\end{equation}
where $V(s,l)$ is the asymptote of the total electromagnetic potential, and $f_+(v)$ is the asymptotic variable of the Dirac field as defined at the end of Section \ref{timelike}. All four expressions on the rhs of these equations are \emph{gauge-invariant}: the scalar product parts because of the second kind gauge-invariance of $f_+$, and the invariance of the symplectic form parts is proved by an easy calculation using the form of generators and the identity \eqref{gauch}.

If analogous incoming quantities $P^\inc_a$ and $M^\inc_{ab}$ are formed, then one finds the expected conservation laws $P^\inc=P^\out$ and $M^\inc=M^\out$. In case of the energy-momentum this is also equal to the integral of the total energy-momentum density over any Cauchy surface, but in case of the angular momentum a similar statement needs a comment. The $1/R^2$ spacelike tail of the electromagnetic field has the consequence that the density of the angular momentum is not absolutely integrable over a Cauchy surface. However, the oddness of the tail of electromagnetic field implies that this density is also asymptotically odd, which allows one to regularize the integral over spacelike hyperplane by taking a limit of integrals over balls of finite radius. The quantity so regularized is equal to the incoming and the outgoing angular momentum. As an aside we note the following interesting fact. In a theory in which both electric and magnetic charges are present, the analogous statements are not true; namely $M^\inc\neq M^\out$ -- the angular momentum leaks into the spacelike infinity \cite{he95}.

If we now decompose $V(s,l)=V(+\infty,l)+V^\out(s,l)$, then $\{ V,r_aV\}=\{ V^\out,r_aV^\out\}$ -- the electromagnetic energy-momentum is expressed in terms of the free `out' field. However, the angular momentum turns out to contain a~mixed adv-out term:
\begin{equation}
 \tfrac{1}{2}\{ V,n_{ab}V\}=\tfrac{1}{2}\{ V^{{\rm out}},n_{ab}V^{{\rm out}}\} + \{ V^{{\rm out}},n_{ab}V(+\infty,.)\}\,.
\end{equation}
Using the asymptotic form of the Dirac field one finds
\begin{equation}\label{Vplus}
 V_a(+\infty,l)=\int n(v) V^e_a(v,l)\, d\mu(v)\,,
\end{equation}
where $n(v)=\ov{f(v)}\ga\cdot v f(v)$ is the asymptotic density of
particles moving with velocity $v$ and $V_a^e(v,l) =ev_a/v\cdot l$ is the null asymptote of the Lorentz potential of the Coulomb field surrounding the particle with charge $e$ moving with constant velocity $v$. Noting the identity $(n_{ab}V^e)_c(v,l)=\mu_{ab}V_c^e(v,l)$, we obtain
\begin{equation}
 \{ V^{{\rm out}},n_{ab}V(+\infty,.)\}=-\int n(v)\mu_{ab}\{V^e(v,.),V^\out\}d\mu(v)\,.
\end{equation}
The latter form of this term allows its absorption into the matter part by a~phase transformation. We introduce
\begin{equation}\label{g}
 g(v)=\exp\big(i\{V^e(v,.),V^\out\}\big)f(v)=\exp\big(i\{V^e(v,.),V\}\big)f(v)
\end{equation}
(the second form following from $\{V^e(v,.),V(+\infty,.)\}=0$) and then the conserved quantities take the form
\begin{equation}\label{PMg}
\begin{aligned}
P^\out_a&=(g_+, p_ag_+) +{\txt\frac{1}{2}}\{ V^\out,r_aV^\out\}\, ,\\
M^\out_{ab}&=(g_+, m_{ab}g_+) +{\txt\frac{1}{2}}\{ V^\out,n_{ab}V^\out\}\, .
\end{aligned}
\end{equation}
These expressions look formally as sums of independent free fields contributions. Therefore, we identify the Dirac out-field by placing $g_+$ in place of $f$ in the integral representation \eqref{DF}. However, this Dirac field must then describe massive particles together with their Coulomb fields. This will be confirmed in quantization. We end this section by giving another form to the $f\to g$ transformation: if one uses the definition \eqref{Fi}, then the identity \eqref{FiV} applied to the pair $V^\out(-\infty,l)$, $\Phi^\out(l)$ implies
\begin{equation}\label{gFi}
 g(v)=\exp\bigg(\frac{ie}{4\pi}\int\frac{\Phi^\out(l)}{(v\cdot l)^2}\, \dl\bigg)f(v)\,.
\end{equation}

\section{Other gauges. LGT is not a gauge symmetry of the asymptotic structure}\label{lgtrans}

We have thus completed the discussion of the asymptotic structure of the classical electrodynamics. With this picture before our eyes let us now once more return to the question of the gauge symmetry of this structure.

In our constructions we have assumed the class of the Lorenz gauges. This class is distinguished by a well-defined dynamics, and allows for gauge transformations within the class, which are encoded in the null asymptotes by $V_2(s,l)=V_1(s,l)+l\alpha(s,l)$. The question we want to discuss now is the following: are there other/wider classes of gauges for which this structure is preserved?

The class of electromagnetic fields $F_{ab}(x)$ considered here is characterized by the existence of the Lorenz gauge potentials $A_b(x)$ with well defined null asymptotes $V_b(s,l)$. We now want to admit other gauges $\hat{A}=A+\p\Lambda$, but we keep the restriction demanding that the potentials have well defined null asymptotes, i.e.\ for each $\hat{A}_a(x)$ there is
\begin{equation}
 \lim_{R\to\infty}R\hat{A}_b\Big(\frac{st}{t\cdot l}+Rl\Big)=\hat{V}_b(s,l)
\end{equation}
---the denominator $t\cdot l$ guarantees that the asymptote $\hat{V}_b(s,l)$ has the same homogeneity property as $V_b(s,l)$: $\hat{V}(\lambda s,\lambda l)=\lambda^{-1}\hat{V}(s,l)$, $\lambda>0$ (see the discussion of the variables $R,s,l$ in Appendix \ref{Rsl}).
It follows that for the gauge function $\Lambda(x)$ the limit
\begin{equation}
 \lim_{R\to\infty}R(\p_b\Lambda)\Big(\frac{st}{t\cdot l}+Rl\Big)=\hat{V}_b(s,l)-V_b(s,l)
\end{equation}
has to exist. This condition restricts the gauge functions to those with the asymptotic expansion
\begin{equation}\label{expla}
 \Lambda\Big(\frac{st}{t\cdot l}+Rl\Big)=\vep^+(l)+\frac{\beta_t(s,l)}{R}+o(R^{-1})\,,
\end{equation}
with the sufficiently regular rest, so that the leading term of $\p\Lambda$ is obtained by differentiation of the two leading explicitly written terms; note that $\vep^+(l)$ is homogeneous of degree $0$, and $\beta_t(s,l)$ is homogeneous of degree $-1$.

Let $V^+_b(l)$ be a vector function entering with $\vep^+(l)$ into the relation described in Appendix \ref{unchv} for the pair $V$ and $\Phi_V$, formulae \eqref{Vunch}--\eqref{FiV}. Using these relations, and applying the formula for differentiation, Eq.\,\eqref{difsl} in Appendix~\ref{Rsl}, we obtain
\begin{multline}\label{expga}
 (\p_b\Lambda)\Big(\frac{st}{t\cdot l}+Rl\Big)=\frac{1}{R}\Big[\frac{t^a}{t\cdot l}L_{ab}\vep^+(l)+l_b\dot{\beta}_t(s,l)\Big]+o(R^{-1})\\
 =\frac{1}{R}\Big[V^+_b(l)+l_b\Big(\dot{\beta}_t(s,l)-\frac{t\cdot V^+}{t\cdot l}\Big)\Big]+o(R^{-1})\,.
\end{multline}

The form of the expansions \eqref{expla} and \eqref{expga} remains valid if the time axis vector $t$ is replaced by another such vector $t'$. However, while the function $\vep^+(l)$ remains unchanged (as the notation suggests), the function $\beta_t$ undergoes the following transformation:
\begin{equation}\label{transbet}
 \beta_{t'}(s,l)-\beta_t(s,l)=\frac{s}{t\cdot l\, t'\cdot l}\,t^a{t'}^bL_{ab}\vep^+(l)=s\frac{t'\cdot V^+}{t'\cdot l}-s\frac{t\cdot V^+}{t\cdot l}
\end{equation}
-- this is shown by writing $(s'/t'\cdot l')+R'l'=(s/t\cdot l)+Rl$ and expressing $R',s',l'$ in terms of $R,s,l$ in leading orders in $1/R$.
This shows that $\beta_t(s,l)-st\cdot V^+(l)/t\cdot l=\gamma(s,l)$ does not depend on $t$, and with this notation the most general transformation of the null asymptote of the potential is given by $\hat{V}_b(s,l)=V_b(s,l)+V^+_b(l)+l_b\dot{\gamma}(s,l)$. However, if $|\ddot{\gamma}(s,l)|=O(|s|^{-1-\ep})$ for $|s|\to\infty$, which is needed for $\hat{V}(s,l)$ to satisfy the demands of the asymptotic structure, then the term $l_b\dot{\gamma}(s,l)$ is just the gauge transformation previously considered in the Lorenz class,\footnote{Up to a possible term $l_b\dot{\gamma}(+\infty,l)$, which may be included into the term $V^+_b(l)$ by an appropriate addition of a constant to $\vep^+(l)$.} so this addition brings nothing new. Therefore, we set $\gamma(s,l)=0$ and we are left with the transformation
\begin{equation}\label{lgt}
 \hat{V}_b(s,l)=V_b(s,l)+V^+_b(l)\,.
\end{equation}
This is, in a somewhat different guise, the `large gauge transformation' considered by the authors of \cite{kps15}, \cite{cl15}, and in numerous other recent articles.

We now look at the transformation \eqref{lgt} from the intrinsic point of view of the asymptotic structure as described in preceding sections. Does the new asymptote $\hat{V}(s,l)$ fit into the picture? It does, but with interpretation wholly different from the gauge transformation which led to the relation \eqref{lgt}.

First of all, the addition $V^+(l)$ to the asymptote of the potential, if it results from the gauge transformation as discussed above,  does not have the meaning of the initial data at infinity as before, as no equation is given which propagates this data into the bulk of the Minkowski space. Any rigid choice of the bulk functions is rather a transformation into a different sector, than a~symmetry.

Secondly, and even more importantly, the transformation \eqref{lgt} is not a~symmetry of the structure. The symplectic structure is changed under this transformation: if $\hat{V}_i(s,l)=V_i(s,l)+V^+_i(l)$, $i=1,2$, then
\begin{equation}
 \{\hat{V}_1,\hat{V}_2\}=\{V_1,V_2\}+\frac{1}{4\pi}\int \big[V^+_1(l)\cdot \Delta V_2(l)-V^+_2(l)\cdot \Delta V_1(l)\big]\,d^2l\,,
\end{equation}
where $\Delta V_i(l)=V_i(-\infty,l)-V_i(+\infty,l)$. This addition to the symplectic form does not vanish even in the free field case (characterized by  $V_i(+\infty,l)=0$). Moreover, the electromagnetic field contribution to the angular momentum \eqref{PMf} changes:
\begin{equation}
 \tfrac{1}{2}\{\hat{V},n_{ab}\hat{V}\}=\tfrac{1}{2}\{V,n_{ab}V\}+\{V,n_{ab}V^+\}\,.
\end{equation}

On the other hand, the transformation \eqref{lgt} \emph{may} be interpreted within the asymptotic structure of the preceding sections, but it then \emph{changes the physics} of the system. Namely, if we decompose $V(s,l)$ into the sum of the advanced part of the outgoing current
\begin{equation}
 V(+\infty,l)=\int\frac{v}{v\cdot l}\,\rho_+(v)\,d\mu(v)
\end{equation}
(see the discussion in Section \ref{nullspacelike}) and the free outgoing field $V^{\out}(s,l)=V(s,l)-V(+\infty,l)$, then we see that $\hat{V}(+\infty,l)=V(+\infty,l)+V^+(l)$ and $\hat{V}(s,l)-\hat{V}(+\infty,l)=V^\out(s,l)$. The transformation \eqref{lgt} does not change the asymptote of the potential of the free `out' field. On the other hand, we show in Appendix \ref{unchv} that for almost all $V^+$ (the precise sense of this statement is to be found there) there exists the representation
\begin{equation}
 V^+(l)=\int\frac{v}{v\cdot l}\,\rho(v)\,d\mu(v)\,,
\end{equation}
where $\rho(v)$ is a smooth function of compact support on the unit hyperboloid, and such that $\int \rho(v)\,d\mu(v)=0$. This uniquely induces the interpretation of the transformation \eqref{lgt} within the limits of the asymptotic structure:
\begin{quote}
\emph{The transformation \eqref{lgt} adds the advanced electromagnetic field of an external, asymptotic, charge-free current $J^+(x)=x\rho(x)$, $\rho(x)$ homogeneous of degree $-4$}.
\end{quote}
The current $J^+$ is not unique: the function $\rho(v)$ depends on three real variables, while the asymptote $V^+(l)$ is a function of two independent real variables.

As we shall see, the above interpretation will also be confirmed in the quantum case.

\section{Electromagnetic memory}\label{memory}

The long-range electromagnetic variables play a substantial role, both in the classical structure described above, as well as in the quantum theory to be developed below. We stop here to ask what is the observational status of such degrees of freedom.

In this connection it is appropriate to invoke certain analogy between electromagnetic and gravitational waves. Within the limits of linearized gravity, it has been pointed out in 1974 by Zel'dovitch and Polnarev \cite{zp74} that if a~pair of test bodies is exposed to a gravitational wave burst, then their relative separation undergoes a finite and permanent shift between remote past, when the burst has not arrived yet, and remote future, when it has passed away. This effect was discussed again and named the `memory' of the gravitational wave-burst in 1985 \cite{bg85}. The size of this effect depends on the initial and final masses and velocities of the bodies which produce gravitational wave-burst (see, e.g. \cite{th92}). In 1991 it was shown by Christodoulou \cite{ch91} that a substantial, and potentially more prominent, contribution to the displacement of the test bodies is caused by nonlinear effects of gravitation.

In the meantime, in 1981, Staruszkiewicz has discussed an electromagnetic effect which may also be interpreted as a kind of memory of wave-packets \cite{st81}. To my best knowledge this was the first observation of this kind in electrodynamics. Moreover, Staruszkiewicz's effect shows in the most clear way the connection between `memory' and infrared degrees of freedom. We describe the effect in our notation. Let $A(x)$ be a free Lorenz potential with the asymptote $V(s,l)$, thus given by the formula \eqref{freedl}. The scaled potential $A_\la(x)=\la^{-1}A(x/\la)$, $\la>0$, with the asymptote $V_\la(s,l)=V(s/\la,l)$, has the same spacelike tail (as it depends on $V_\la(-\infty,l)=V(-\infty,l)$, see \eqref{spA}), but with $\la\to\infty$ the energy content of the scaled electromagnetic field tends to zero; we call this large $\la$ regime the infrared limit. In this limit the field is to weak to change the momentum of a charged test particle. However, as shown by Staruszkiewicz in the quasi-classical approximation, the phase of the wave function of the particle undergoes, between times minus and plus infinity, the momentum-dependent phase shift
\begin{equation}
 \delta(p)=-\frac{q}{2\pi}\int\frac{p\cdot V(-\infty,l)}{p\cdot l}\,\dl\,,
\end{equation}
where $q$ is the charge of the particle and $p$ its momentum. For particle wave-packets this produces interference effects, which result in an adiabatic finite shift of the wave-packet of the particle. The shift is solely due to the long-range degrees of freedom of the electromagnetic field; the field itself may have a~negligible energy content. Later this effect was confirmed in more elevated settings in \cite{he95} and \cite{he12} (in the latter reference a quantum Dirac field in the external infrared limit field was considered).\footnote{There is, recently, a considerable activity in the field of `electromagnetic memory' and its connection with the long-range structure of electrodynamics, apparently not aware of Staruszkiewicz's work, and the articles just mentioned; see \cite{pa15} and the literature cited there.}

The Staruszkiewicz effect may be simply explained for a classical test particle with charge $q$ and mass $m$ placed in the free external electromagnetic field $F_{ab}(x)$. The equation of motion for the trajectory $z^a(\tau)$:
\begin{equation}
 \ddot{z}^a(\tau)=\tfrac{q}{m}F^a{}_b(z(\tau))\dot{z}^b(\tau)\,,
\end{equation}
where $\tau$ is the proper time and the overdot denotes differentiation with respect to $\tau$, may be written in the equivalent form
\begin{equation}
 z^a(\tau)=z^a(0)+\tau\dot{z}^a(\tau)-\frac{q}{m}\int_0^\tau F^a{}_b(z(\sigma))\dot{z}^b(\sigma)\,\sigma\,d\sigma\,.
\end{equation}
For $\tau$ tending to $\pm\infty$ the integrals $\Delta_\pm^a=-(q/m)\int_0^{\pm\infty}F^a{}_b(z(\sigma))\dot{z}^b(\sigma)\,\sigma d\sigma$ give the vectors separating the `out'/`in' asymptotic free trajectories (straight lines) from the point $z^a(0)$. For scaled $F$, in the infrared limit we have $\dot{z}(\tau)\simeq v$, a~constant four-velocity, and then
the `out' asymptotic trajectory is parallel to the `in' asymptotic trajectory, but shifted by $\Delta^a=\Delta_+^a-\Delta_-^a$. Substituting in these integrals $z(\sigma)\simeq z(0)+v\sigma$, using the representation \eqref{freeFdl}, and integrating by parts in $\sigma$ we find in the infrared limit\footnote{We assume that the field $F$ satisfies all our selection criteria, in particular the condition~\eqref{ima}.}
\begin{equation}
 \Delta^a(v)=-\frac{q}{m}\int\limits_{-\infty}^{+\infty}F^a{}_b(v\sigma)v^b\sigma d\sigma=\frac{q}{2\pi m}\int \frac{l^aV^b(-\infty,l)-l^bV^a(-\infty,l)}{(v\cdot l)^2}\,\dl\,v_b\,.
 \end{equation}
Now we use the relation \eqref{VFi}, followed by \eqref{Ld2l}, to obtain
\begin{equation}
 \Delta^a(v)=\frac{q}{\pi m}\int (l^a-l\cdot v v^a)\frac{\Phi(l)}{(v\cdot l)^3}\,\dl\,.
\end{equation}
All steps in this sketchy derivation could be made watertight, what we are not going to discuss here.

\section{Quantum theory}\label{quantum}

Construction of interacting quantum field theories faces numerous problems. Not only perturbative solutions of nonlinear equations of physically realistic models involving quantum fields need elaborate techniques in order to avoid the well-known ultraviolet divergencies, but the very existence of these equations is not clear beyond perturbative formulation. Whether this state of art is due only to technical problems, or is some better physical input needed, in our opinion is still not clear. However, on the perturbative level the Epstein-Glaser technique \cite{eg73, sc95} of proper time-splitting of distributions supplies the most fundamental, conceptually clear, even if calculationally formidable, solution of the problem.

However, here we are interested rather in the opposite, in terms of momentum transfer, end of scale, the infrared problems. In perturbative calculations these difficulties manifest themselves in the appearance of divergencies in Feynmann diagrams for small momenta. But this perspective only touches the tip of an iceberg: in this case the source of the difficulties is definitely not merely technical, but related to the long-range character of the interaction. To understand the source of the problem one has to consider the theory from its algebraic structure and representation perspective.

One of the paradigms on which the standard quantum field theory is build is relativistic locality. This is best expressed in algebraic terms \cite{ha92}, and consists of two assumptions: (i) basic quantum observables are localized in compact spacetime regions, and all other observables are limits of such quantities; (ii) observables localized in regions spacelike separated commute. For electrodynamics this has the following consequence \cite{gz80, bu82, bu86}: Suppose that in the representation space one can define the spacelike asymptotic field of the type indicated by the classical expression \eqref{spF}. As the localization of this field becomes spacelike to any compact set in the limit, this asymptotic field commutes with all observables. In any irreducible representation this implies that the field is proportional to the unit operator and its specific value is a superselection label of the representation. In the language of previous sections this means that in such setting $V(-\infty,l)=V'(+\infty,l)$ is a classical function characterizing the choice of representation. In view of this knowledge we look once more at relations \eqref{match} and \eqref{infra}. If quantum charged free `in' and `out' particles are ascribed their Coulomb fields, then incoming and outgoing free electromagnetic fields cannot be both represented in unitarily equivalent way (e.g., as mentioned before, if $a^\inc$ is Fock, then $a^\out$ cannot be Fock). Thus the description of scattering falls into troubles.

Standard way to deal with this problem is to `dress' charged particles, in addition to their Coulomb fields, with some permanent `radiation clouds'. These clouds must be chosen so as to decouple the long-range tail of the field attached to the charged particle from its specific momentum and make the rhs of the relation \eqref{infra} vanish.\footnote{Some recent examples of dressing constructions include \cite{cfp09, cfp10, ms14}, where also further bibliographic information may be found.}  Another potentially possible strategy is to use representations which do not satisfy the above assumptions and are so singular in the infrared/long range regime as to be able to mask radiative soft additions (e.g.\ so called KPR infravacua \cite{kpr77}).

In what follows we want to avoid any of the above two paths. We insist that free charged particles carry their Coulomb fields and these fields alone form their structural parts. The only way not to come in conflict with the above discussion is to admit certain degree of nonlocality in the theory: the long-range variables will acquire genuinely quantum character. Having agreed to this we proceed as follows.

\section{Construction of the algebra}\label{constalg}

We want to use the heuristic method of `relativistic quantization', in which one demands that the classical expressions for energy-momentum and 4-an\-gu\-lar momentum should become the Poincar\'e generators of the quantum theory. This is not in conflict with the knowledge that Lorentz symmetry may be broken in electrodynamics: the heuristic idea is applied only on the algebraic level, while breaking of symmetry happens on the level of representations. The form of conserved quantities \eqref{PMf} suggests that $f$ and $V$ should be quantized independently. We denote the corresponding quantum variables $f^q$ and $V^q$ and then the quantization conditions are easily obtained:
\begin{gather}
\big[\{ V_1,V^{{\rm q}}\}, \{ V_2,V^{{\rm q}}\}\big] = i\{ V_1, V_2\}\,,\\[1ex]
\big[(f_1,f^{{\rm q}}), (f_2,f^{{\rm q}})\big]_+=0\,,\quad
\big[(f_1,f^{{\rm q}}), (f_2,f^{{\rm q}})^*\big]_+= (f_1, f_2)\,.
\end{gather}
Here $f_i$ and $V_i$ are test variables. The variable $V^q$ has a well- defined physical meaning as the total electromagnetic field at null infinity. On the other hand, as argued above, the physical field equipped with its Coulomb field should be defined by \eqref{g}, thus on the quantum level $g^q(v)=\exp\big(i\{V^e(v,.),V^q\}\big)f^q(v)$. We now use the pair $g^q,V^q$ as generating a closed algebra. Denoting
\begin{equation}\label{PW}
 \Psi(g_i)=(g_i,g^q)\,,\quad W(V_i)=\exp\big[-i\{V_i,V^q\}\big]
\end{equation}
one obtains
\begin{gather}
\begin{split}\label{alg1}
 &W(V_1)W(V_2)= e^{\txt -\frac{i}{2}\{ V_1,V_2\}} W(V_1+V_2)\,,\\[1ex]
 &W(V)^* = W(-V)\,,\quad W(0)=1\,,
\end{split}\\[2ex]
 [\Psi(g_1), \Psi(g_2)]_+ =0\,,\quad [\Psi(g_1), \Psi(g_2)^*]_+ = (g_1, g_2)1\,,\label{alg2}\\[1.5ex]
W(V) \Psi(g) = \Psi(S_\Phi g) W(V)\,,\label{alg3}
\end{gather}
where
\begin{equation}
\big(S_\Phi g\big)(v) = \exp\Big(\dsp i\frac{e}{4\pi} \int\frac{\Phi(l)}{(v\cdot l)^2}\,\dl\Big)\, g(v)\, .
\label{alg4}
\end{equation}
One notes that $\{V_1,V^q\}W(V_2)=W(V_2)\big(\{V_1,V^q\}+\{V_1,V_2\}\big)$. Thus $W(V_2)$, when acting on a state, adds to this state the field $V_2$. However, the Coulomb fields carried by particles are now attached to the field $\Psi(g)$, so the test fields $V_i$ should be free fields test functions, therefore  we demand \mbox{$V_i(+\infty,l)=0$}. Function $\Phi(l)$ is related to $V(-\infty,l)$ as described in Section \ref{step3}.

Relations \eqref{alg1} - \eqref{alg4} may be now considered detached from the heuristic considerations which led to them. One shows that they generate a $C^*$-algebra, which simply means that they can indeed be represented by bounded operators in a Hilbert space. Physically meaningful representations must satisfy some selection criteria, and we shall briefly comment on the form of possible representations below. But first we want to clarify the question of gauge (in)dependence of the algebra.

\section{Gauge invariance}\label{gaugeinv}

First, we recall that the test functions $V_i(s,l)$ are free field functions, in particular $l\cdot V_i(s,l)=0$. It follows that any gauge transformation of the quantum field $V^q(s,l)\to V^q(s,l)+l\al(s,l)$ leaves the quantity $\{V_i,V^q\}$ unchanged\footnote{In accordance with the remark on the gauge of potential in Section \ref{nullspacelike}, after Eq.\,\eqref{spF}, we assume $\al(+\infty,l)=0$. Moreover, we assume, as is usual in such analyses, that the gauge function $\al(s,l)$ is a c-number; otherwise the transformation of $g^q$ could pose problems and lead to a change of algebra.}. Thus $W(V_i)$ should be interpreted as exponentiations of the total electromagnetic \emph{field}, and not merely potential. Second, the field $f$ was obtained in a~gauge-independent way, and the transition from $f$ to $g$ is given by~\eqref{gFi}. Thus on the (heuristic) quantum level the gauge transformation causes the transformation $g^q\to e^{ie\sigma}g^q$, where $\sigma=\frac{1}{4\pi}\int\al(-\infty,l)\dl$. In terms of algebraic elements: $\Psi(g_i)\to e^{ie\sigma}\Psi(g_i)$. Summarizing, gauge transformations lead only to transformations of the first kind of the elements of the algebra.

Another question related to gauge symmetry is, whether the elements of the algebra depend on the gauge of test fields $V_i(s,l)$. To answer this, it is sufficient to consider $W(l\al)$ -- element with pure gauge test field. Then $\{l\alpha,V^q\}=-\sigma Q^q$, where $Q^q=l\cdot V^q$ should have the interpretation of the total charge (see Eq.\,\eqref{gauch}). Therefore, one should have $W(l\alpha)\Psi(g)W(l\alpha)^*=e^{-ie\sigma}\Psi(g)$, what agrees with the relations \eqref{alg3}, \eqref{alg4}. This also shows that the only dependence on gauge is through the factor appearing in the definition \eqref{alg4} of $S_\Phi$. Therefore, we interpret the additive gauge constant in $e\Phi(l)$ as a~phase variable and identify $e\Phi\sim e\Phi'\mod 2\pi$.

\section{Representations}\label{repr}

Physically meaningful representations of the algebra must satisfy some selection criteria. We impose two standard conditions: regularity of representation of $W(V)$ (which means that these unitary operators are in fact exponentiations of selfadjoint generators) and positivity of energy. The latter condition means that translations are represented by unitary transformations and their generators -- energy momentum operators -- have joint spectrum in the forward lightcone. It was shown in \cite{he98} that such representations have the following form. The representation space is $\Hc=\Hc_F\otimes\Hc_r$, where $\Hc_F$ is the usual Fock space for Dirac fermions, on which $\Psi(g)=(g,g^q)$ act in the usual way. The space $\Hc_r$ carries some regular, positive energy representation $W_r(V)$ of commutation relations \eqref{alg1}. The representation of $W(V)$ is then given by
\begin{equation}\label{Wq}
 W(V)=e^{-i\{V,V^q(+\infty,.)\}}\otimes W_r(V)\,,
\end{equation}
where $V^q(+\infty,l)$ is the quantum version of \eqref{Vplus}:
\begin{equation}\label{Vqplus}
 V^q_a(+\infty,l)=\int n^q(v) V^e_a(v,l)\, d\mu(v)\,,\quad n^q(v)=\ :\ov{g^q(v)}\ga\cdot v g^q(v):
\end{equation}
-- normal ordering in $n^q$.

Representation $W_r(V)$ is not uniquely defined. However, it is important to realize that the algebra contains in a substantial way variables at infinity. One of the consequences is that $W_r(V)$, the electromagnetic part of the representation, cannot be the usual Fock representation or any coherent state representation.
More than that, there can be no vector state with zero energy. However, a class of representations which are close to standard structures may be formed by a direct integral over coherent state representations \cite{he98}. In the following we shall use properties of these representations.

\section{Variables at spacelike infinity and quantum transformations induced by them}\label{splike}

Let us discuss the spacelike asymptotic structure of the electromagnetic field in some detail. As was seen in \eqref{spA} and \eqref{spF}, the spacelike tail is determined by $V(-\infty,l)$. In quantum case this has to be smeared with some test function, so let us consider the quantity
\begin{equation}\label{U}
 U(V^+)=\exp\Big\{\frac{i}{4\pi}\int V^+(l)\cdot V^q(-\infty,l)\,\dl\Big\}\,,
\end{equation}
where $V^+(l)$ is a smooth test function related to a homogeneous function $\vep^+(l)$ as described in Section~\ref{lgtrans}.
In spite of the smearing in $l$, the quantity \eqref{U} is not yet completely defined, as it involves a formal value at $s=-\infty$. Now, we split
\[
 V^q(-\infty,l)=V^q(+\infty,l)+[V^q(-\infty,l)-V^q(+\infty,l)]\,.
 \]
The first part involves only the Coulomb fields of outgoing particles, and in our representation has a well defined meaning \eqref{Vqplus}, giving
\begin{multline}
 U^\mathrm{Coul}(V^+)=\exp\Big\{\frac{i}{4\pi}\int V^+(l)\cdot V^q(+\infty,l)\,\dl\Big\}\otimes\id\\
 =\exp\Big\{\frac{ie}{4\pi}\int n^q(v)\int\frac{v\cdot V^+(l)}{v\cdot l}\,\dl\, d\mu(v)\Big\}\otimes \id\\
 =\exp\Big\{\frac{ie}{4\pi}\int n^q(v)\int\frac{\vep^+(l)\,\dl}{(v\cdot l)^2}\, d\mu(v)\Big\}\otimes \id\,,
\end{multline}
the third equality by an analogue of \eqref{FiV}. The second part $V^q(-\infty,l)-V^q(+\infty,l)$ involves only the free quantum outgoing field, thus it should be possible to obtain the corresponding part of $U$ by some limiting process $\dsp U^\mathrm{free}(V^+)=\id\otimes\lim_{\beta\searrow0}W_r(V^+_\beta)$, with appropriately chosen $V^+_\beta(s,l)$. This indeed is the case and we sketch the solution. For functions of $s$ we denote the Fourier transform:
\begin{equation}
 \wt{f}(\w)=\frac{1}{2\pi}\int f(s) e^{i\w s}\,ds\,.
\end{equation}
Let $\wt{h}(\w,l)$ be a smooth function, fast vanishing for $|\w|\to\infty$, such that $\wt{h}(\la^{-1}\w,\la l)=\wt{h}(\w,l)$ ($\la>0$), and $\wt{h}(0,l)=1$. Denote\footnote{What follows is a simplified version of the construction of the function $V'_{\beta a}(s,l)$ in the proof of Theorem 6.4 in \cite{he98}. The existence of the limit defining $U(V^+)$ is a special case of the construction in part (iii) of this proof.}
\begin{equation}\label{Veb}
 \wt{V}^+_\beta(\w,l)=\frac{-|\w|^{\beta-1}\wt{h}(\w,l)}{2\int|u|^{\beta-1}|\wt{h}(u,l)|^2du}\,V^+(l)\,.
\end{equation}
Then $V^+_\beta(s,l)$ is homogeneous of degree $-1$ and vanishes for $|s|\to\infty$. In consequence, as one can show, its symplectic form with any classical function $V(s,l)$ with asymptotes $V(\pm\infty,l)$ may be written as
\begin{equation}
 \{V^+_\beta,V\}=i\int \ov{\wt{\dot{V}}^+_\beta(\w,l)}\cdot\wt{\V}(\w,l)\,\frac{d\w}{\w}\,\dl\,.
\end{equation}
Now, using the definition \eqref{Veb} one finds
\begin{equation}
 \lim_{\beta\searrow0}\{V^+_\beta,V\}=-\frac{1}{4\pi}\int V^+(l)\cdot[V(-\infty,l)-V(+\infty,l)]\,\dl\,.
\end{equation}
This is a classical calculation, but remembering that $W_r(V_\beta^+)=e^{-i\{V^+_\beta,V^q\}}$ one can ask whether a similar limit exists for this operator-valued expression. It turns out that in the representations mentioned above it does: for $\beta\searrow0$ the operators $W_r(V_\beta^+)$ converge weakly to a unitary operator, which implies that the convergence is in fact in the strong operator sense. Thus we define the unitary operators:
\begin{equation}
 U^\mathrm{free}(V^+)=\id\otimes \lim_{\beta\searrow0}W_r(V^+_\beta)\,,\quad U(V^+)=U^\mathrm{Coul}(V^+)\,U^\mathrm{free}(V^+)\,.
\end{equation}

The transformation induced by these unitaries on the basic variables of the algebra is easily calculated and yields:
\begin{align}
 U(V^+)W(V_1)U(V^+)^*&=U^\mathrm{free}(V^+)W(V_1)U^\mathrm{free}(V^+)^*\\
 &=\exp\Big\{\frac{i}{4\pi}\int V^+(l)\cdot V_1(-\infty,l)\,\dl\Big\}\,W(V_1)\,,\label{lrW}\\
 U(V^+)\Psi(g_1)U(V^+)^*&=U^\mathrm{Coul}(V^+)\Psi(g_1)U^\mathrm{Coul}(V^+)^*\\
 &=\Psi(S_{\vep^+}g_1)\,,\label{lrP}
\end{align}
where $S_{\vep^+}$ is the transformation defined in \eqref{alg4}. For the interpretation of this result we note what follows. \\
1/ The asymptotic algebra is generated by the second kind-gauge invariant quantities $W(V)$ and $\Psi(g)$, which are nontrivially transformed under the action \eqref{lrW} and \eqref{lrP}.\\
2/ The action of the free part $U^\mathrm{free}(V^+)$ transforms $W(V_1)$, while leaving $\Psi(g_1)$ unchanged. Recall the interpretation of the elements of the algebra supplied by \eqref{PW}, in particular that $W(V_1)=\exp[-i\{V_1,V^q\}]$, and also note that $\frac{1}{4\pi}\int V^+(l)\cdot V_1(-\infty,l)\,\dl=-\{V_1,V^+\}$. Therefore, the rhs of \eqref{lrW} should be interpreted as $\exp[-i\{V_1,V^q+V^+\}]$, which means that this transformation adds a classical constant function $V^+(l)$ to $V^q(s,l)$. As discussed in concluding paragraphs of Section \ref{lgtrans} this addition is naturally interpreted as the asymptote of the Coulomb field of some classical external, charge-free current added by this transformation. Such addition does not change the quantum field $\Psi(g_1)$.\\
3/ The action of the Coulomb part $U^\mathrm{Coul}(V^+)$ transforms $\Psi(g_1)$, while leaving $W(V_1)$ unchanged. To interpret this transformation we choose any free Lorenz potential asymptote $V(s,l)$ (thus $V(+\infty,l)=0$) such that $\V(s,l)$ is even in $s$, and $V(-\infty,l)=V^+(l)$. This asymptote defines an even  potential $A(x)$ according to \eqref{freedl}. The scaled function $V_\la(s,l)=V(s/\la,l)$, $\la>0$, produces the scaled potential $A_\la(x)=\la^{-1}A(x/\la)$, which for $\la\to\infty$ gives the electromagnetic field with energy content tending to zero, but with unchanged spacelike tail. The operator $W(V_\la)$ creates this electromagnetic field as discussed after \eqref{alg4}. Using the algebraic relations \eqref{alg1} and \eqref{alg2} one finds that the transformation of $\Psi(g_1)$ induced by $W(V_\la)$ is identical with that induced by $U^\mathrm{Coul}(V^+)$, while $W(V_\la)W(V_1)W(V_\la)^*=\exp[-i\{V_\la,V_1\}]W(V_1)$. Integrating by parts we obtain
\begin{equation}
 \{V_\la,V_1\}=-\frac{1}{4\pi}\int V^+(l)\cdot V_1(-\infty,l)\,\dl-\frac{1}{2\pi}\int V(s/\la,l)\cdot \V_1(s,l)\,ds\dl\,,
\end{equation}
which vanishes in the limit $\la\to\infty$, as $V(0,l)=V^+(l)/2$ due to evenness of $\V(s,l)$. Now, the operators $W(V_\la)$ do not converge in this limit, but the transformation induced by them on the basic variables agrees in the limit with that induced by $U(V^+)$, which is therefore interpreted as the infrared limit transformation.\\
4/ The above discussion shows that the variables at spacelike infinity have a kind of duality structure: the Coulomb tail variables are conjugate to the infrared free tail variables.

\section{Scattering}\label{scatt}

The constructions started at the beginning of Section \ref{invstruct} and further developed up to now refer to the causal future -- outgoing fields. Similar constructions of the asymptotic algebra and its representation may be performed for causal past -- incoming fields. Smearing functions $g$ in Dirac fields $\Psi^\inc(g)$ are then as in outgoing elements $\Psi^\out(g)$. A slight modification occurs for electromagnetic fields. Recall that the smearing test functions in $W^\out(V)$ are free-field future null asymptotes $V(s,l)$ vanishing for $s\to+\infty$. Recall, also, that the past null asymptote of this field is then given by $V'(s,l)$ related to $V(s,l)$ by~\eqref{Vprim}. Therefore, we take as test functions for `in' case these past asymptotes: $W^\inc(V')$. The mapping $\Psi^\inc(g)\to\Psi^\out(g)$, $W^\inc(V')\to W^\out(V)$ is then a canonical isomorphism of the two algebras. If a complete theory could indeed be developed along the lines sketched here, then there should exist the scattering operator $S$ in the representation space such that on the level of representations $W^\out(V)=S^*W^\inc(V')S$, $\Psi^\out(g)=S^*\Psi^\inc(g)S$.\footnote{In a slightly different language -- with test functions localized in spacetime -- this was discussed in \cite{he08}.} By an appropriate limit described in the last section one obtains the long-range variables $U^\inc(V^+)$ and $U^\out(V^+)$, which are thus also related by this adjoint transformation. However, in fact the variables at spacelike infinity are conserved, so $U^\inc=U^\out\equiv U$. Therefore, one obtains
\begin{equation}
 [U(V^+),S]=0\,.
\end{equation}
This is convergent with an observation made by the authors of \cite{kps15}.

\section{Concluding remarks}

The problem of understanding, from fundamental point of view, the long-range -- infrared structure of quantum electrodynamics is still open. There are many ideas how to approach its solution, some of them mentioned above. The persistent resistance of the field to be fully understood provokes radical attempts, including a recent proposal \cite{br14} to deny the spacelike infinity quantities an experimentally accessible status.

This article summarizes a proposal going in the opposite direction (as compared with \cite{br14}). The scheme summarized above goes outside the strict local paradigm and admits into the theory some nonlocal observables at spacelike infinity. Whether it may be developed further to become indeed the asymptotic description of interacting theory is still to be seen. The scheme seems to be a rather direct quantization of the classical causal asymptotic structure, if free charged particles are to carry their Coulomb fields, not extended by some `clouds' -- a theoretical construct, rather not having experimental motivation. As we have seen, the scheme necessarily leads outside traditional structures of popular quantum field theory, like Fock space, and needs a substantial mathematical care in defining and handling quantum variables. The scheme has been further developed in other directions, including spacetime localization properties and scattering with classical currents \cite{he08, hr11}.

\section*{Appendix}
\setcounter{section}{0}
\renewcommand{\thesection}{\Alph{section}}

\section{Invariance of the integral (1)}\label{intnull}

Let the function $f(k)$ be homogeneous of degree $-2$. We note that the integral on the rhs of \eqref{d2l} may be written as
\[
 I_t=2\int f(k)\,\delta(k^2)\,\delta(t\cdot k-1)\,d^4k\,.
\]
As the Dirac distribution $\delta(k^2)$ is also homogeneous of degree $-2$, we have
\[
 \p\cdot\big[kf(k)\,\delta(k^2)\big]=0
\]
outside any neighborhood of $k=0$. Thus using the Gauss theorem we obtain for vectors $t$, $t'$ ($\theta$ is the Heaviside step function):
\[
 0=2\int\p\cdot\Big\{kf(k)\,\delta(k^2)\,\big[\theta(t\cdot k-1)-\theta(t'\cdot k-1)\big]\Big\}\,d^4k=I_t-I_{t'}\,,
\]
which shows that $I_t$ in fact does not depend on $t$ (note that the expression in braces has a compact support outside zero, so there are no boundary terms).

In fact, as mentioned in a footnote in Section \ref{measure}, the invariance of the integral is much larger. Let $\rho(k)$ be a homogeneous function of degree $+1$, such that $\rho(k)=1$ is a Cauchy surface cutting the forward lightcone (any such surface can be defined in this way in the neighborhood of the lightcone). Replacing in the last equation $t'\cdot k$ by $\rho(k)$ one shows that $I_t$ is also equal to
\[
 I_\rho=2\int f(k)\,\delta(k^2)\,\delta(\rho(k)-1)\,d^4k\,.
\]

\section{Variables $R,s,l$}\label{Rsl}

Let $B(x)$ be any differentiable field on Minkowski space. We choose a unit, future-pointing timelike vector $t$, and for $R\geq0$, $s\in\mR$ and $l\in\lc$ write

\begin{equation}
 x=\frac{st}{t\cdot l}+Rl\,,\qquad B(x)=b(R,s,l)\,.
\end{equation}
Then $b$ has the following scaling property: $b(R/\la,\la s,\la l)=b(R,s,l)$, $\la>0$. In particular:
\begin{equation}
\text{if}\ \ b(R,s,l)=\sum_{n=K}^N\frac{b^{(n)}(s,l)}{R^n}+o(R^{-N})\,,\ \text{then}\ \ b^{(n)}(\la s,\la l)=\la^{-n}b^{(n)}(s,l)\,.
\end{equation}

Differentiation: as the $R$-dependence of $b(R,s,l)$ is determined by the $s,l$-dependence and the scaling property, it is sufficient to apply $\p_s\equiv\p/\p s$ and $L_{ab}$. We find
\begin{equation}
 \p_sb(R,s,l)=\frac{t}{t\cdot l}\cdot \p B(x)\,,\quad L_{ab}b(R,s,l)=R\big(l_a\p_bB(x)-l_b\p_aB(x)\big)\,.
\end{equation}
Contracting the second equality with $t^a$ and using the first equality one finds
\begin{equation}\label{difsl}
 \p_bB(x)=l_b\dot{b}(R,s,l)+\frac{t^a}{R\,t\cdot l}L_{ab}b(R,s,l)\,,
\end{equation}
where $\dot{b}\equiv\p_sb$.

\section{Uncharged functions $V_a(l)$}\label{unchv}

We consider here the linear space of real, smooth vector functions $V(l)$ on $\lc$ with the following properties:
\begin{equation}\label{Vunch}
 V(\lambda l)=\lambda^{-1}V(l)\,,\quad l\cdot V(l)=0\,,\quad L_{[ab}V_{c]}(l)=0\,.
\end{equation}
For each function in this class there exists a smooth function $\Phi(l)$ homogeneous of degree $0$, such that
\begin{equation}\label{PV}
 L_{ab}\Phi(l)=l_aV_b(l)-l_bV_a(l)\,.
\end{equation}
Conversely, for each smooth homogeneous function $\Phi(l)$ the relation \eqref{PV} holds with some $V$.
The correspondence $V\leftrightarrow\Phi$ is not quite unique, namely:\\
1/ given $V$, the function $\Phi$ is determined up to the addition of a constant;\\
2/ given $\Phi$, the function $V$ is determined up to the following addition, referred in the following as a gauge transformation: $V(l)\rightarrow V(l)+l\alpha(l)$. To obtain $V_b$ given $\Phi$, one may extend $\Phi(l)$ into the neighborhood of the cone respecting homogeneity, differentiate $\p\Phi(l)/\p l^b$, and then again restrict the result to the cone. Different extensions yield functions differing by a gauge transformation.

However, for a given function $V$ there does exist a specific choice of the function $\Phi$, denoted in the following $\Phi_V$, satisfying the above structure and making the mapping $V\mapsto \Phi_V$ unique, namely:
\begin{equation}\label{defFiV}
 \Phi_V(l)=\frac{1}{4\pi}\int\frac{l\cdot V(l')}{l\cdot l'}\,\dl'
\end{equation}
Gauge transformations of $V$ induce then addition of constants to $\Phi_V$ by:
\begin{equation}\label{gaugeVF}
 V(l)\to V(l)+l\alpha(l)\quad \implies\quad \Phi_V(l)\to\Phi_V(l)+\frac{1}{4\pi}\int\alpha(l')\,\dl'\,.
\end{equation}
Note that not every gauge transformation changes $\Phi_V$, depending on the value of the integral in the last formula. For later use we note the following identity: if $\Phi_V(l)$ is defined by \eqref{defFiV}, then for each $t$:
\begin{equation}\label{FiV}
 \int\frac{\Phi_V(l)}{(t\cdot l)^2}\,\dl=\int\frac{t\cdot V(l)}{t\cdot l}\,\dl\,.
\end{equation}

The space of functions $V(l)$ may be equipped with the scalar product
\begin{equation}
 (V_1,V_2)_{IR}=-\int (V_1\cdot V_2)(l)\,d^2l\,.
\end{equation}
This product is strictly positive definite on the space of equivalence classes of functions $V$ up to gauge transformations. This space may be completed to a real Hilbert space, which we denote $\Hc_{IR}$ (see \cite{he98}). In the rest of this section we prove the following new result needed in the main text.
\begin{pr}
Functions of the form $\dsp V_\rho(l)=\int \frac{v\,\rho(v)}{v\cdot l}\,d\mu(v)$, where $\rho$ runs over the space of smooth, compactly supported functions on the unit future hyperboloid, such that $\int\rho(v)\,d\mu(v)=0$, form a dense subspace of $\Hc_{IR}$.
\end{pr}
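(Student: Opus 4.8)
\emph{Plan.} I would work throughout with the scalar potentials $\Phi_V$ of Appendix \ref{unchv} rather than with the vectors $V$ themselves, using the fact that the class of $V$ modulo gauge transformations is faithfully encoded by $\Phi_V$ modulo an additive constant, and that the pre-Hilbert space underlying $\Hc_{IR}$ is thus the space of smooth zero-mean functions on the celestial sphere. The first step is to write $\Phi_{V_\rho}$ explicitly. For $\rho$ smooth and compactly supported the function $V_\rho(l)=\int v\,\rho(v)(v\cdot l)^{-1}d\mu(v)$ is smooth, homogeneous of degree $-1$, and transversal with $l\cdot V_\rho=\int\rho\,d\mu=0$; moreover $L_{ab}\log(v\cdot l)=l_av_b(v\cdot l)^{-1}-l_bv_a(v\cdot l)^{-1}$, so integrating against $\rho$ gives $l_aV_{\rho b}-l_bV_{\rho a}=L_{ab}\!\big(\int\rho(v)\log(v\cdot l)\,d\mu(v)\big)$. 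The right entry is homogeneous of degree $0$ (here $\int\rho\,d\mu=0$ is essential), and $L_{ab}$ annihilates among degree-$0$ functions only the constants, so comparing with \eqref{PV} one obtains, modulo an irrelevant constant,
\[
 \Phi_{V_\rho}(l)=\int \rho(v)\,\log(v\cdot l)\,d\mu(v)\,.
\]

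Next I would fix the model for $\Hc_{IR}$. Choosing a time axis $t$ and the gauge $t\cdot V=0$, formula \eqref{PV} gives $V_b=(t\cdot l)^{-1}t^aL_{ab}\Phi$ on $\lct$, so $(V_1,V_2)_{IR}$ is a positive quadratic form in the first derivatives on the sphere $\lct\cong S^2$; in particular the $\Hc_{IR}$-norm of the class of a smooth $\Phi$ is dominated by $\|\Phi\|_{C^1(S^2)}$. Consequently the classes of finite combinations of spherical harmonics $Y_{\ell m}$ with $\ell\ge1$ are total in $\Hc_{IR}$: they are $C^\infty$-dense in the smooth zero-mean functions, which are dense in $\Hc_{IR}$ by construction. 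It therefore suffices to show that for every $\ell_0\ge1$ and every harmonic $y$ of degree $\ell_0$ there is an admissible $\rho$ with $\Phi_{V_\rho}=c\,y$, $c\neq0$. Working in the frame $t$, write $v=(\cosh\xi,\sinh\xi\,\hat n)$, $l=(1,\hat m)\in\lct$, so that $v\cdot l=\cosh\xi-\sinh\xi\,(\hat n\cdot\hat m)$ and $d\mu(v)\propto\sinh^2\xi\,d\xi\,d\Omega(\hat n)$, and expand the kernel $\log(\cosh\xi-\sinh\xi\,x)=\sum_{\ell\ge0}a_\ell(\xi)P_\ell(x)$. Taking $\rho(v)=r(\xi)\,y(\hat n)$ with $r$ smooth, $r\ge0$, supported in a small interval $(0,\delta]$, and using $\int y(\hat n)P_\ell(\hat n\cdot\hat m)\,d\Omega(\hat n)=\tfrac{4\pi}{2\ell+1}\delta_{\ell\ell_0}\,y(\hat m)$, one finds $\Phi_{V_\rho}=c\,y$ with $c\propto\int r(\xi)\,a_{\ell_0}(\xi)\,\sinh^2\xi\,d\xi$; note $\int\rho\,d\mu\propto\int y\,d\Omega=0$, so $\rho$ is admissible, and it is genuinely smooth on the hyperboloid because it is supported away from $\xi=0$.

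The only delicate point — and what I expect to be the main obstacle — is to verify that $a_{\ell_0}\not\equiv0$ near $\xi=0$, so that $r$ can be chosen making $c\neq0$. For this I would expand the kernel in powers of $\xi$: since $\cosh\xi-\sinh\xi\,x=1+u$ with $u=\sum_{k\ge1}\xi^k c_k(x)$, $c_1(x)=-x$ and each $c_k$ of degree $\le1$ in $x$, the coefficient of $\xi^n$ in $\log(1+u)$ is a polynomial in $x$ of degree $\le n$, whose degree-$n$ part can only arise from the term $\tfrac{(-1)^{n-1}}{n}u^n$ through the monomial $(\xi c_1)^n$, hence equals $-x^n/n$. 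Thus this coefficient has degree exactly $n$, its Legendre expansion has a nonzero $P_n$-component, and therefore $a_\ell(\xi)=b_{\ell\ell}\,\xi^\ell+O(\xi^{\ell+1})$ with $b_{\ell\ell}\neq0$; in particular $a_{\ell_0}$ is nonzero on a punctured neighbourhood of $0$. (The expansions are legitimate because $\cosh\xi-\sinh\xi\,x>0$ for all real $\xi$ and $x\in[-1,1]$, so the $\xi$-series converges uniformly in $x\in[-1,1]$ for small $|\xi|$, allowing term-by-term Legendre projection.) This produces the required $\rho$ for each $(\ell_0,y)$, hence $\{V_\rho\}$ contains the classes of all $Y_{\ell m}$ with $\ell\ge1$, a dense set in $\Hc_{IR}$, which proves the Proposition. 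A parallel argument via the orthogonal complement is also available: using \eqref{FiV} one reduces orthogonality of some $W$ to all $V_\rho$ to the statement that $v\mapsto\int \Phi_W(l)(v\cdot l)^{-2}\dl$ is constant, and the analogous Legendre expansion of $(\cosh\xi-\sinh\xi\,x)^{-2}$ then forces $\Phi_W$ constant; but this route requires handling a non-smooth $W$ and I would use the constructive version above as the primary one.
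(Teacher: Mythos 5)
Your argument is correct, but it proves the Proposition by a genuinely different route from the paper. The paper argues by duality: it assumes $V\in\Hc_{IR}$ is orthogonal to every $V_\rho$, deduces that $v\mapsto\int\frac{v\cdot V(l)}{v\cdot l}\,d^2l$ is constant on the hyperboloid, rewrites this via the potential $\Phi$ of \eqref{PV} and \eqref{FiV} as $\int\Phi(l)\,(x\cdot l)^{-2}\,d^2l=0$ for all future timelike $x$ (after absorbing the constant), and then recognizes the left-hand side as a solution of the wave equation in the representation \eqref{freedl} whose future null asymptote is $\Phi(l)/(y\cdot l)$; vanishing of the solution forces $\Phi=0$, hence $V=0$. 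This is exactly the ``parallel argument via the orthogonal complement'' you mention and set aside, and your reservation about it is apt: the paper must invoke the extension of the correspondence $V\leftrightarrow\Phi$ to the non-smooth elements of the completion, for which it cites the Appendix of \cite{he98}. Your constructive version avoids that entirely: the identity $\Phi_{V_\rho}(l)=\int\rho(v)\log(v\cdot l)\,d\mu(v)$ (mod constants), the Funk--Hecke reduction for $\rho=r(\xi)y(\hat n)$, and the leading-order computation $a_\ell(\xi)=b_{\ell\ell}\xi^\ell+O(\xi^{\ell+1})$ with $b_{\ell\ell}\neq0$ are all sound (the $m=n$, all $k_i=1$ counting argument correctly isolates the $-x^n/n$ term), and they show that the image of $\rho\mapsto V_\rho$ actually contains a nonzero multiple of every harmonic $Y_{\ell m}$ with $\ell\ge1$ --- slightly more than density. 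What you pay for this is the choice of a frame $t$ (the paper's argument is frame-covariant in spirit) and a longer computation; what you gain is an elementary, self-contained proof that never leaves the smooth category and does not rely on the wave-equation asymptote machinery or on external references for the completion $\Hc_{IR}$. One small point worth making explicit if you write this up: the admissible $\rho$ form a linear space and $\rho\mapsto V_\rho$ is linear, so hitting each $Y_{\ell m}$ individually does give totality of the whole family $\{V_\rho\}$.
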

\begin{proof}
Functions $V_\rho$ are smooth, and all the assumptions \eqref{Vunch} are straightforwardly checked. Let $V\in\Hc_{IR}$ be orthogonal to all functions in this class. We have to show that then $V=0$. The assumption may be written as
\begin{equation}
\int \bigg\{\int\frac{v\cdot V(l)}{v\cdot l}\,d^2l\bigg\}\rho(v)\,d\mu(v)=0
\end{equation}
for each $\rho$. The integral in braces is a continuous function in $v$, thus it follows from this equation that it is a constant function. Choosing a fixed vector $t$ one can write this as
\begin{equation}\label{Vort}
 \int V(l)\cdot \Big(\frac{v}{v\cdot l}-\frac{t}{t\cdot l}\Big)\,d^2l=0\,.
\end{equation}
For all $V$ in the Hilbert space $\Hc_{IR}$ there exist $\Phi$ satisfying relation \eqref{PV} (see \cite{he98}, Appendix). Moreover, for smooth $V_1$ one has then $(V,V_1)_{IR}=\int \Phi(l)\,\p\cdot V_1(l)\,d^2l$. Hence, Eq.\ \eqref{Vort} has an equivalent form
\begin{equation}
 \int\Phi(l)\Big(\frac{1}{(v\cdot l)^2}-\frac{1}{(t\cdot l)^2}\Big)\,d^2l=0\,.
\end{equation}
Using the freedom of adding a constant to $\Phi$ one can assume without restricting generality that $\int \Phi(l)/(t\cdot l)^2\,d^2l=0$. For all $x=\sqrt{x^2}v$ in the forward lightcone we now have
\begin{equation}
 \frac{1}{2\pi}\int\frac{\Phi(l)}{(x\cdot l)^2}\,d^2l=0\,.
\end{equation}
Now, the lhs is the integral representation \eqref{freedl} of a solution $C(x)$ of the wave equation inside the future lightcone; for $y$ inside the cone this solution has the asymptote $\lim_{R\to\infty}RC(y+Rl)=\Phi(l)/(y\cdot l)$. Therefore, $\Phi=0$, and consequently also $V=0$.
\end{proof}

\section{An identity}\label{identity}

Here we prove an identity showing the equivalence, at the classical level, of a~generator postulated by the authors of \cite{kps15} to the expression in the exponent of our definition~\eqref{U}.

Let the future null asymptote of the total electromagnetic field $F_{ab}(x)$ be given by \eqref{nullF}. Then one shows that\footnote{This follows directly from the spinor formula (2.44) in \cite{he95}.}
\begin{gather}
 \lim_{R\to\infty}R\,(x+Rl)^aF_{ab}(x+Rl)= Z_b(x\cdot l,l)\,,\\
 Z_b(s,l)=L_{ba}V^a(s,l)-V_b(s,l)+s\V_b(s,l)\,.
\end{gather}
We contract the limit relation with $t^b$ and set $x=st/t\cdot l$. This gives
\begin{equation}
 t\cdot Z(s,l)=\lim_{R\to\infty}R^2l^at^bF_{ab}\Big(\frac{st}{t\cdot l}+Rl\Big)\,.
\end{equation}
We choose a homogeneous function $\vep^+(l)$ and form the integral
\begin{equation}\label{Qept}
 Q_{\vep^+}=\frac{1}{4\pi}\int \frac{\vep^+(l)\,t\cdot Z(-\infty,l)}{t\cdot l}\,\dl\,,
\end{equation}
which is the definition put forward by the authors of \cite{kps15} as the generator of LGT. The following identity is proved by a straightforward calculation
\begin{multline}
 L_{ab}\bigg[\vep^+(l)\,\frac{t^aV^b(-\infty,l)}{t\cdot l}\bigg]=Q\bigg[\frac{\vep^+(l)}{(t\cdot l)^2}-\frac{t\cdot V^+(l)}{t\cdot l}\bigg]\\
 +V^+(l)\cdot V(-\infty,l)+\vep^+(l)\,\frac{t^a}{t\cdot l}\Big[L_{ab}V^b(-\infty,l)-V_a(-\infty,l)\Big]\,,
\end{multline}
where $Q=l\cdot V(-\infty,l)$ and $V^+$ is related to $\vep^+(l)$ as described in Section~\ref{lgtrans}; note that the last term in this identity is equal to the integrand of \eqref{Qept}. Integrating this identity with $\dl$ one finds
\begin{equation}\label{Qclass}
 Q_{\vep^+}=-\frac{1}{4\pi}\int  V^+(l)\cdot V(-\infty,l)\,\dl
\end{equation}
(the lhs of the identity is annihilated under integration by \eqref{Ld2l}, and the term proportional to $Q$ falls out by the analogue of \eqref{FiV}). The latter form of $Q_{\vep^+}$ has the advantage of being manifestly Lorentz-invariant, and it appears in the exponent of our $U(V^+)$, Eq.\,\eqref{U}.

\frenchspacing

\end{document}